\newtheorem{thm}{Theorem}[section]
\newtheorem{lem}[thm]{Lemma}
\theoremstyle{definition}
\newtheorem{definition}[thm]{Definition}
\newtheorem{example}[thm]{Example}
\theoremstyle{remark}
\newtheorem{remark}[thm]{Remark}
\numberwithin{equation}{section}
\begin{document}


\title[The Interconnectivity Vector]{The Interconnectivity Vector: A Finite-Dimensional Vector Representation of Persistent Homology}


\author{Megan Johnson}
\address{Department of Mathematics, University at Buffalo, SUNY,
Buffalo, NY 14260}
\email{meganjoh@buffalo.edu}


\author{Jae-Hun Jung}
\address{Department of Mathematics, POSTECH,
Pohang, Korea}
\email{jung153@postech.ac.kr}



\begin{abstract}
Persistent Homology (PH) is a useful tool to study the underlying structure of a data set. Persistence Diagrams (PDs), which are 2D multisets of points, are a concise summary of the information found by studying the PH of a data set. However, PDs are difficult to incorporate into a typical machine learning workflow. To that end, two main methods for representing PDs have been developed: kernel methods and vectorization methods. In this paper we propose a new finite-dimensional vector, called the {\it interconnectivity vector}, representation of a PD adapted from bag-of-words (BoW). This new representation is constructed to demonstrate the connections between the homological features of a data set. This initial definition of the interconnectivity vector proves to be unstable, but we introduce a stabilized version of the vector and prove its stability with respect to small perturbations in the inputs.  We evaluate both versions of the presented vectorization on several data sets and show their high discriminative power.

\end{abstract}

\keywords{
topological data analysis, persistent homology, vectorization, machine learning}

\maketitle

\section{Introduction}

Topological Data Analysis (TDA) is a rich field containing a powerful framework for the analysis of high-dimensional data structure \cite{carlsson:2009}. One of the key tools in TDA is persistent homology (PH), introduced by Edelsbrunner et al. in 2002 \cite{Edelsbrunner2002}. PH provides a detailed snapshot of the geometry of the underlying data's structure. This relatively new field has recently seen an explosion in its possible applications with new applications being explored constantly. 
Previous applications of TDA include 3D shape segmentation \cite{carriere:3Dshapes}, astronomy \cite{Xu2019, heydenreich2020}, biology and medicines \cite{McGuirl2020,Nicponski2020}, contagious disease spread \cite{Taylor:2015, Lo2018}, data mining \cite{Almgren:data}, finance \cite{gidea:2018, Gidea2017}, image tampering detection \cite{asaad:2017}, material science \cite{Lee2018, Sorensen2020}, neuroscience \cite{bendich2016, Sizemore:2019}, social networks \cite{Almgren2017}, and even sports analytics \cite{Goldfarb2014}.

The results of TDA, whether in the form of a persistence diagram (PD) or a barcode, do not lend themselves to inclusion in a typical machine learning workflow as they are multisets and graphs, respectively. To reconcile TDA with machine learning algorithms, two main avenues have been explored: kernel methods, in which distances between PDs are defined and exploited, and vectorization methods, in which PDs (or barcodes) are sampled or transformed into vectors of finite dimension. 

Kernel methods, such as the Wasserstein Kernel or the Sliced Wasserstein Kernel \cite{carrire2017sliced,kolouri2015sliced}, while theoretically robust, use computationally expensive metrics to compute the distance between PDs. Additionally, kernel methods for PDs treat each point on a PD as if it were alone and do not take into account potential relationships between points, especially the concurrence of topological features represented by these points. Historically, topological spaces were distinguished by their $n$th Betti numbers, the rank of their $n$th homology groups (see Section 2), which roughly count the number of $n$-dimensional holes in the space. Kernel methods using the Wasserstein or Sliced Wasserstein distances usually measure mainly more the physical differences between PDs rather than what those differences mean for the topology of the underlying space.

Vectorization methods, such as the Topological Vector \cite{carriere:3Dshapes} and Persistence Images \cite{adams2015persistence}, are, in general, more computationally efficient than kernel methods but sometimes lack robustness. The construction of the topology vector is straightforward and easy to implement but it can have issues distinguishing between topologies when used in classification problems. If we wish to make distinctions between data sets based on topology we need a vector or method which will contain information (e.g. Betti numbers) about the coexistence of features and the relative size (i.e. length of existence) of features. 

In this paper we introduce a new finite-dimensional vectorization, called the {\it interconnectivity vector}, of the PD which contains information about not only the relative size of each homological feature, but also its relative position in regards to other homological features. In effect, the interconnectivity vector is a measure of the concentration of persistence points that exist during the lifetime of each off-diagonal point on the PD. The original topology vector shows the hierarchical pattern of persistences (or existences) of homological features but not how those features are related. The new interconnectivity vector basically follows the similar idea of the topology vector but it makes the topology vector deliver richer information about the given data structure by considering the interconnectivity between features. In this paper, we will provide detailed analysis of the proposed vectorization method including stability. The initial definition of the interconnectivity vector proves to be unstable meaning that a small change in a PD leads to a large change in the $L_{\infty}$-norm of the vector.  To remedy this we propose a new stabilized version of the interconnectivity vector and prove its stability. In  \cite{zielinski2018persistence}, the bag of words concept, commonly used in natural language processing \cite{Sivic},  was used for the construction of persistence bag of words as the PD is a multiset, and the Gaussian mixture model was applied to stabilize the persistence bag of words. We adopt the similar Gaussian approach to stabilize the interconnectivity vector. We show numerically that the original interconnectivity vector and stabilized interconnectivity vector yield different behaviors in their realization. Numerical examples show the validation of the proposed vectorization.

This paper will be organized as follows: Section 2 will cover the background of PH and its representations. Section 3 will cover previous kernel and vectorization methods for PDs. Section 4 will introduce the definition of the interconnectivity vector as well as provide examples and discuss stability. Section 5 introduces a stabilized version of the interconnectivity along with an example. Section 6 demonstrates the effectiveness of the new vectorization on classical data sets. Section 7 contains our concluding remarks.

\section{Background}
In this section we will briefly discuss homology, persistent homology, and  representations of persistent homology as persistence diagrams and barcodes.  To begin, we will start with homology groups of arbitrary topological spaces and then introduce homology groups of simplical complexes before finishing with persistent homology and its representations. For more information see \cite{hatcher:2000, carlsson:2009, Otter2015roadmap}.

\subsection{Singular Homology}

In algebraic topology, homology is a topological invariant used to, roughly speaking, describe the holes of various dimensions in a given space. Recall that a simplex is a generalized notion of a triangle or tetrahedron. For example, a 0-simplex is a vertex or point, a 1-simplex is a line or edge, a 2-simplex is a triangle, a 3-simplex is a tetrahedron, and so on and so forth.

\begin{definition} An $n$-simplex is the smallest convex set in a Euclidean space $\mathbb{R}^m$ containing the $n+1$ points $v_0,v_1, \dots, v_n$ (called the vertices) that do not lie in a hyperplane of dimension less than $n$. The simplex is denoted by $[v_0,v_1,\dots,v_n]$.
\end{definition}

\begin{definition} The standard $n$-simplex is
$$\Delta^n = \{ (t_0,t_1, \dots,t_n) \in \mathbb{R}^{n+1} | \, \Sigma_i t_i = 1 \text{ and } t_i \geq 0 \text{ for all } i\}$$
whose vertices are unit vectors on the coordinate axes.
\end{definition}

We will now define singular homology group for a topological space $X$ but first we need the following definition
\begin{definition} A singular $n$-simplex in a topological space $X$ is a continuous map $\sigma$ from the standard $n$-simplex, $\Delta^n$ to $X$, i.e. $\sigma: \Delta^n \to X$. Note that $\sigma$ need not be injective. \end{definition}

Let $C_n(X)$ be the free abelian group whose basis is the set of singular $n$-simplices in $X$. Elements of $C_n(X)$ are finite formal sums $\sum_{i} n_i \sigma_i$ for $n_i \in \mathbb{Z}$ and $\sigma_i : \Delta^n \to X$. Define the boundary map $\delta_n : C_n(X) \to C_{n-1}(X)$ by 
$$\delta_n (\sigma) = \sum_i (-1)^i \sigma | \, [v_0, \dots,\hat{v}_i, \dots v_n]$$
where the notation $\hat{v}_i$ means that the vertex $v_i$ is removed and $\sigma | \, [v_0, \dots,\hat{v}_i, \dots v_n]$ is the restriction of $\sigma$ to the remaining vertices. Note that this restriction of $\sigma$ to the remaining $n$ vertices is itself a singular $(n-1)$-simplex as it is a map from $\Delta^{n-1} \to X$. 
\begin{lem}\label{boundary} The composition $ \delta_{n-1} \circ \delta_{n}: C_{n} \to C_{n-1} \to C_{n-2}$ is trivial.
\end{lem}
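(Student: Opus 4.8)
The plan is to prove the identity by direct computation on a single singular $n$-simplex $\sigma : \Delta^n \to X$. Since $C_n(X)$ is free abelian on such simplices and both $\delta_n$ and $\delta_{n-1}$ are group homomorphisms, it suffices to verify $\delta_{n-1}(\delta_n(\sigma)) = 0$ on each generator $\sigma$. First I would expand $\delta_n(\sigma) = \sum_i (-1)^i \sigma|[v_0, \dots, \hat{v}_i, \dots, v_n]$ and then apply $\delta_{n-1}$ term by term, producing a double sum whose summands are faces of $\sigma$ with two vertices deleted, each weighted by a product of two signs.

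The crucial bookkeeping is the sign attached to the \emph{second} deletion. After $v_i$ has been dropped, the surviving vertices are reindexed: a vertex $v_j$ with $j < i$ keeps its position $j$, whereas a vertex $v_j$ with $j > i$ now occupies position $j-1$. Accordingly I would split the double sum into the part with $j < i$, carrying sign $(-1)^i(-1)^j$, and the part with $j > i$, carrying sign $(-1)^i(-1)^{j-1}$, so that the full expression is $\sum_{j<i}(-1)^{i+j}\,\sigma|[\dots,\hat{v}_j,\dots,\hat{v}_i,\dots] + \sum_{j>i}(-1)^{i+j-1}\,\sigma|[\dots,\hat{v}_i,\dots,\hat{v}_j,\dots]$.

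The key step is then to match the two halves. A given doubly-deleted face with smaller removed index $b$ and larger removed index $a$ appears in the first sum with sign $(-1)^{a+b}$; the very same face appears in the second sum, with the roles of the two summation indices interchanged, carrying sign $(-1)^{a+b-1}$. These two signs differ by a factor of $-1$, so after renaming the dummy variables in one sum the terms cancel in pairs and the total vanishes, giving $\delta_{n-1}\circ\delta_n = 0$.

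The main obstacle I anticipate is purely the sign discipline: correctly accounting for the index shift $j-1$ and exhibiting the bijection between each term of one sum and its cancelling partner in the other. There is no conceptual difficulty beyond careful indexing, and reindexing one of the sums by swapping the names of its two summation variables makes the cancellation transparent.
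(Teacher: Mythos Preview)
Your proposal is correct and is precisely the standard argument: expand the double boundary on a generator, split according to whether the second deleted index lies below or above the first, and observe that the resulting signs $(-1)^{i+j}$ and $(-1)^{i+j-1}$ on identical doubly-deleted faces cancel in pairs. The paper itself does not supply a proof of this lemma but simply refers the reader to Hatcher \cite{hatcher:2000}, where exactly this computation appears; your write-up is essentially that proof.
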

(For the proof of Lemma \ref{boundary} see \cite{hatcher:2000}.)

This vanishing consecutive boundary relation gives us the inclusion $Im~ \delta_{n+1} \subset$ $Ker~  \delta_n$, where $Im$ and $Ker$ denote the image and kernel, respectively. Finally, we have

\begin{definition}The $n$-th homology group of the topological space $X$ is the quotient group
$$H_n(X) = \text{ Ker } \delta_n / \text{ Im } \delta_{n+1}$$

\end{definition}

Roughly speaking, the rank of the $n$-dimensional homology group indicates how many {\it holes} of dimension $n$ exist in $X$. Specifically, the $0$-dimensional homology group counts the number of connected components in $X$, the $1$-dimensional homology group counts the number of $S^1$ type holes, the $2$-dimensional homology group counts the number of vacuous or $S^2$ type holes, etc. Take for example the unit sphere. For the sphere we have $H_0 = \mathbb{Z}^1, H_1 = \mathbb{Z}^0, H_2 = \mathbb{Z}^1$. On the other hand, for a torus we have $H_0 = \mathbb{Z}^1, H_1 = \mathbb{Z}^2, H_2 = \mathbb{Z}^1$. The existence, quantity, and type of holes are all useful information that we will use in this research. 


\subsection{Simplicial Homology}
It can be computationally difficult to compute the homology of arbitrary topological spaces so instead we often approximate our space by simplical complexes (rather than $\Delta^n$ and singular simplices) for which homology can be computed algorithmically. 

\begin{definition}
A simplicial complex $K$ is a collection of simplices such that 
\begin{itemize}
\item if $\sigma$ is a simplex in $K$, then $K$ contains all lower-dimensional simplices of $\sigma$, and
\item the non-empty intersection of any two simplices in $K$ is a simplex in $K$. 
\end{itemize}
The dimension of the simplicial complex $K$ is the maximum of the dimension of its simplices. A map between simplicial complexes, $f: K \to L$, is a map that sends the vertex set of $K$ to the vertex set of $L$ such that $f(\sigma) \in L$ for all $\sigma \in K$. 
\end{definition}

Let $K$ be a simplical complex and let $C_n(K)$ be the free abelian group whose basis is the set of $n$-simplices in $K$. Elements of $C_n(K)$ are finite formal sums $\sum_{i} n_i \sigma_i$ for $n_i \in \mathbb{Z}$ and where $\sigma_i$ is a $n$-simplex in $K$.

Let $\sigma$, a basis element of $C_n(K)$, be an oriented $n$-simplex with vertices $v_0, \dots, v_n$.  Let $\delta_n:$ $C_n \rightarrow C_{n-1}$ be the boundary map 
$$\delta_n(\sigma) = \sum_{i=0}^n (-1)^i[v_0,\dots, \hat{v}_i, \dots, v_n]$$
where the simplex $[v_0,\dots, \hat{v}_i, \dots, v_n]$ is the $i$th face of $\sigma$ obtained by deleting $v_i$. A slight change to Lemma \ref{boundary} will again show that $\delta_{n-1}\circ \delta_n (\sigma) = 0$. Again this vanishing consecutive boundary relation gives us the inclusion $Im~ \delta_{n+1} \subset $ $Ker~ \delta_n$ and so we have the following definition. 

\begin{definition} The $n$-th homology group of the simplicial complex $K$ is the quotient group
$$H_n(K) = \text{ Ker } \delta_n / \text{ Im } \delta_{n+1}$$

\end{definition}

To quantify the number of $n$-dimensional holes in the topological space $X$, or its simplical complex approximation $K$, we use the $n$-dimensional Betti number, $\beta_n$. 
\begin{definition}The $n$-dimensional Betti number, $\beta_n$, is the rank of the $n$th homology group, i.e. the number of linearly independent generators of $H_n$.
\end{definition}

More explicitly, $\beta_0$ counts the number of connected components, $\beta_1$ counts the number of loops, $\beta_2$ counts the number of vacuous volumes, and so on. Most importantly, Betti numbers are topological invariants, meaning that spaces which are topologically equivalent have the same Betti numbers. 


\subsection{Persistent Homology}
In order to study the homology of a data set we must first construct a simplicial complex using its points as vertices. There are several ways to build this simplicial complex, but in this paper we will restrict ourselves to the {\it Vietoris-Rips simplicial complex}.

\begin{definition}
Given a finite data set $S$ viewed as a subset of the metric space $(X,d)$ and a filtration value $\tau \geq 0$, the Vietoris-Rips complex $VR_{\tau}(S)$ built on the vertex set $S$ has a $k$-simplex for every collection of $k + 1$ vertices that are within a distance $\tau$ of each other. More explicitly, 

$$VR_{\tau}(S) = \{\sigma \subseteq S \, | \, d(x,y) \leq 2\tau \text{ for all } x,y \in \sigma \}$$
\end{definition}

Note that we are only able to construct the Rips complex at a specific scale $\tau$. The choice of $\tau$ is not obvious for any given data set and so persistent homology \cite{Edelsbrunner2002} solves this issue by computing the homology of many nested complexes at various filtration values. Specifically, we create a nested chain of Vietoris-Rips complexes for different values of $\tau$. By virtue of its definition as a simplicial complex, the Vietoris-Rips complex has the nice property that for any increasing sequence of non-negative values $\tau_0 \leq \tau_1 \leq \dots \leq \tau_m$ where $\tau_m$ is the maximum filtration we have
$$VR_{\tau_0} \subseteq VR_{\tau_1} \subseteq \dots \subseteq VR_{\tau_m}= VR_{\tau}$$ 
We call $VR_{\tau}$ a {\it filtered simplicial complex}. 

\begin{figure}[!htb]
\minipage{0.245\textwidth}
  \includegraphics[width=\linewidth]{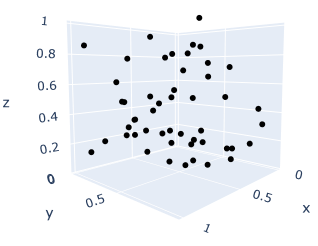}
  
\endminipage\hfill
\minipage{0.245\textwidth}
  \includegraphics[width=\linewidth]{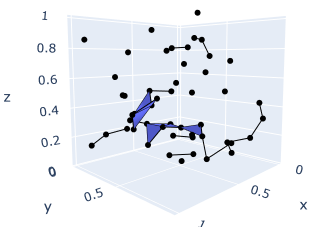}

\endminipage\hfill
\minipage{0.245\textwidth}
  \includegraphics[width=\linewidth]{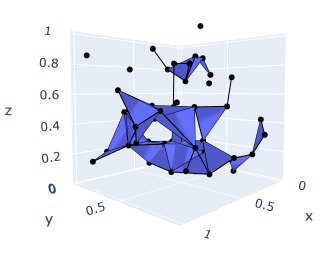}

\endminipage\hfill
\minipage{0.245\textwidth}
  \includegraphics[width=\linewidth]{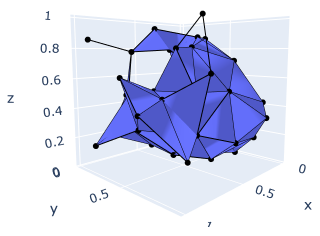}
\endminipage\hfill

\caption{Vietoris-Rips complex for filtration values $\tau =0, 0.2,0.3,0.4$, respectively. Images generated with Plotly for Python.} \label{fig:complex}
\end{figure}

Consider a uniform random distribution of 50 points in the unit cube. Figure \ref{fig:complex} shows four subcomplexes of the Vietoris-Rips filtered simplicial complex on this uniform random point cloud. Note that when a 2-simplex is generated its convex hull is shaded in. Using filtered simplicial complexes we can study both the large and small scale structure of a data set by calculating homology groups.

For PH we compute the $n$-th homology group for each Vietoris-Rips complex in the chain. Note that for all $i,j \in \{0,1, \dots, m\}$ with $i \leq j$ we get a natural inclusion map $VR_{\tau_i} \hookrightarrow VR_{\tau_j}$ which induces a homomorphism $f_{i,j}: H_n(VR_{\tau_i}) \to H_n(VR_{\tau_j})$ between $n$-th homology groups \cite{Otter2015roadmap}. Evaluating how the homology groups, more specifically Betti numbers, change from subcomplex to subcomplex lets us examine the structure of the data set at various scales. The PH of a data set gives us not only the homology groups and the Betti numbers, but also the filtration values for which various topological features appear and disappear. With all this information we need a way to efficiently visualize and analyze the results

\subsection{Representations of Persistent Homology}
There are several ways to represent the information provided by PH but the main representations are persistence barcodes and persistence diagrams. A  {\it persistence barcode} is a graph whose horizontal axis lists filtration values and which contains rays or line segments that correspond to the homology group generators beginning at some filtration {\it birth} and ending at some filtration {\it death} value if its lifetime is finite.
\begin{figure}[hbt!]
  \centering
    \includegraphics[width=0.49\textwidth]{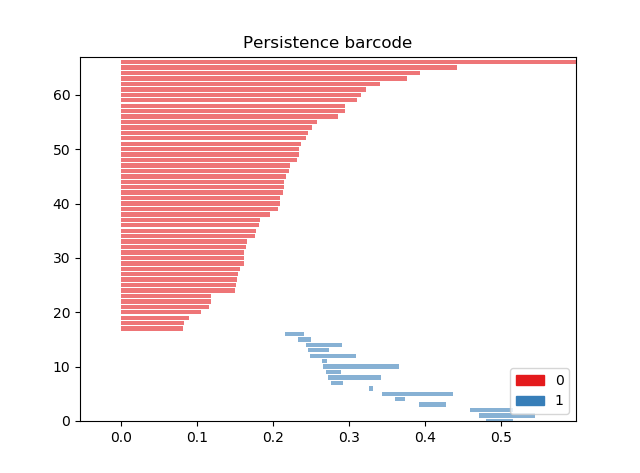}
    \includegraphics[width=0.49\textwidth]{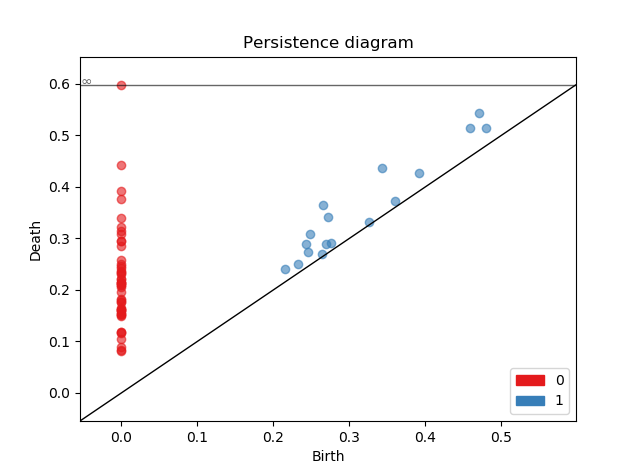}
\caption{The persistence barcode (left) and the persistence diagram (right) for a random distribution of points in the unit cube. Note that  the red bars or points correspond to the 0-dimensional PH and the blue bars or points correspond to the 1-dimensional PH. Additionally, note that the horizontal bar at the top of the PD represents homological features which persist longer than the maximum filtration used in the computation. }\label{fig:barcode}
\end{figure}
In Figure \ref{fig:barcode}, generated using the GUDHI Library (see \cite{gudhi:urm}), we have the barcode for the PH on a randomly distributed set of 50 points in the unit cube with maximum filtration $\tau_m = 0.8$. 

Corresponding to each barcode there is a {\it persistence diagram} (PD), a multiset of points where each point on the diagram represent the birth-death pairs of each generator of the homology group (see Figure \ref{fig:barcode}). The diagonal line $y = x$ is included in the diagram for calculation purposes. Persistence points close to the diagonal have small persistence and generally thought of as noise. Traditionally, birth values are along the horizontal axis and death values are along the vertical axis. 

\section{Machine Learning Representations of Persistent Homology}\label{section:mlrep}

Machine learning algorithms and persistence barcodes or diagrams are incongruous in general unless they are transformed properly. In order to use the information given by PH we need an efficient and topologically robust method for transforming a PD into something compatible with a traditional machine learning workflow. 

Kernel methods for PDs take advantage of theoretically rich, yet computationally costly, metrics such as the Wasserstein or Sliced Wasserstein metrics to compute the ``effort"  of transforming one PD into another \cite{wasserstein:1969,carrire2017sliced}. These metrics measure the physical differences in persistence diagrams, not the underlying information provided by the diagrams. Additionally, these metrics are highly dependent on the scale of the data. There are instances where topologically similar data sets lead to PDs between which the Wasserstein and Sliced Wasserstein distances are relatively large. For example, consider the two point clouds formed by the space curves $r_1(t) = \langle \cos(t),\sin(t),\cos(t^2)\rangle$ and $r_2(t) = \langle \cos(t),\sin(t),\cos(t^2) + \sin(t^2) \rangle$
\begin{figure}[hbt!]
  \centering
\minipage{0.329\textwidth}
    \includegraphics[width=\linewidth]{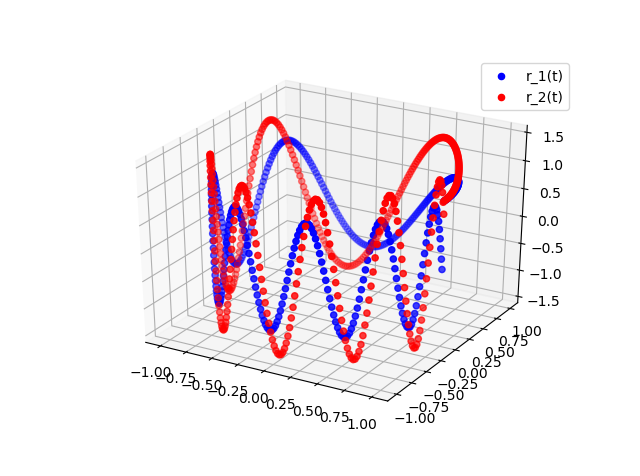}

\endminipage\hfill
\minipage{0.329\textwidth}
	 \includegraphics[width=\linewidth]{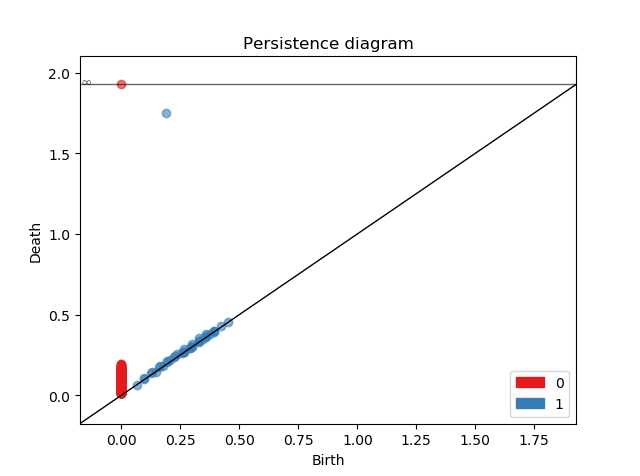}

\endminipage\hfill
\minipage{0.329\textwidth}
	 \includegraphics[width=\linewidth]{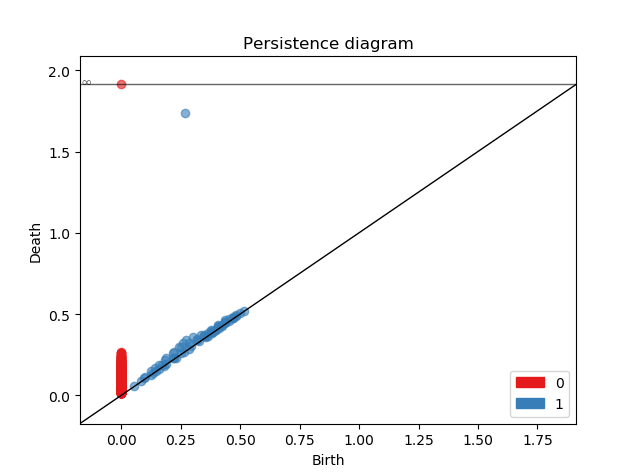}	

\endminipage\hfill

\caption{Graphs of $r_1(t)$ (in blue) and $r_2(t)$ (in red) (left) and their corresponding PDs (center and right, respectively).}\label{fig:cosslidingpics}
\end{figure}
pictured in Figure \ref{fig:cosslidingpics}. The Wasserstein distance between their corresponding persistent diagrams is approximately 0.987710397 and the Sliced Wasserstein distance between their persistent diagrams calculated with 50 directions is 0.47876553. Even though these two point clouds have extremely similar topology and similarly structured PDs, they have non-zero Wasserstein and Sliced Wasserstein distances between them. 

There are similar issues with previous Vectorization methods for PDs. Consider the vectorization of a PD in which the vector entries are simply the persistence of each point on the diagram, of a fixed homology dimension, sorted into decreasing order. Call this the {\it persistence vector}. Using the same example of the point clouds generated by $r_1(t)$ and $r_2(t)$ the largest four entries of the corresponding persistence vectors of dimension 1 (normalized so that the entries lie between 0 and 1 by dividing each entry by the maximum value) are
$$
\langle 1, 4.87754541\times 10^{-2}, 4.09349618\times 10^{-2}, 3.75080674\times 10^{-2} \rangle
$$
and
$$
\langle 1, 1.46535480\times 10^{-2}, 1.41529871\times 10^{-2}, 1.25857935 \times 10^{-2} \rangle
$$
respectively. The distance between these persistence vectors is nonzero (although the concept of closeness should be defined first) even though the topology of both point clouds is similar. 

This persistence vector has an additional issue with representing the structure of the point cloud faithfully. Consider a PD with four persistence points at $(0.01, 0.02)$ pictured in Figure \ref{fig:uniform} with a graph representing its underlying point cloud in 2D. In addition, consider another PD with four persistence points at $(0.01,0.02)$, $(0.02, 0.03)$, $(0.03, 0.04)$, and $(0.04,0.05)$ pictured in Figure \ref{fig:nonuniform} with a graph representing its underlying point cloud depicted in 2D. The distance between nodes in the figure is the Euclidean distance. Note that both of the corresponding point clouds exist in a high dimension and so in order to visualize these point clouds we draw them in 2D as graphs using TikZ. 
\begin{figure}[hbt!]
  \centering
\minipage{0.5\textwidth}
    \includegraphics[width=\linewidth]{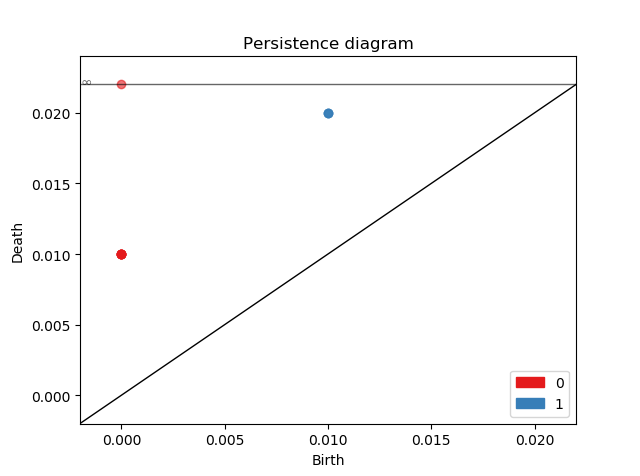}

\endminipage\hfill
\minipage{0.99\textwidth}
	\begin{tikzpicture}[scale=0.7]
\begin{scope}[every node/.style={circle,thick,draw}]
    \node (A) at (0,6) {A};
    \node (B) at (4,6) {B};
    \node (C) at (8,6) {C};
    \node (D) at (12,6) {D};
    \node (E) at (16,6) {E};
    \node (F) at (16,0) {F} ;
		\node (G) at (12,0) {G};
		\node (H) at (8,0) {H};
		\node (I) at (4,0) {I};
		\node (J) at (0,0) {J};
\end{scope}

\begin{scope}[>={Stealth[black]},
              every node/.style={fill=white,circle},
              every edge/.style={draw=red,very thick}]
    \path [-] (A) edge node {$1$} (B);
    \path [-] (B) edge node {$1$} (C);
	  \path [-] (C) edge node {$1$} (D);
	  \path [-] (D) edge node {$1$} (E);
	  \path [-] (E) edge node {$1$} (F);
	  \path [-] (F) edge node {$1$} (G);
	  \path [-] (G) edge node {$1$} (H);
	  \path [-] (H) edge node {$1$} (I);	  
	  \path [-] (I) edge node {$1$} (J);
	  \path [-] (J) edge node {$1$} (A);
	  \path [-] (A) edge[color=blue] node {$2$} (I);
	  \path [-] (B) edge[color=blue] node {$2$} (J);
	  \path [-] (B) edge[color=blue] node {$2$} (H);
	  \path [-] (C) edge[color=blue] node {$2$} (I);
	  \path [-] (C) edge[color=blue] node {$2$} (G);
	  \path [-] (D) edge[color=blue] node {$2$} (H);
	  \path [-] (D) edge[color=blue] node {$2$} (F);
	  \path [-] (E) edge[color=blue] node {$2$} (G);
	  \path [-] (A) edge[color=green] node {$3$} (H);
	  \path [-] (C) edge[color=green] node {$3$} (J);
	  \path [-] (B) edge[color=green] node {$3$} (G);
	  \path [-] (D) edge[color=green] node {$3$} (I);
	  \path [-] (C) edge[color=green] node {$3$} (F);
	  \path [-] (E) edge[color=green] node {$3$} (H);
	  \path [-] (A) edge[color=orange, bend right=25] node {$4$} (G);
	  \path [-] (B) edge[color=orange, bend right=25] node {$4$} (F);
	  \path [-] (J) edge[color=orange, bend right=25] node {$4$} (D);
	  \path [-] (I) edge[color=orange, bend right=25] node {$4$} (E);
	  \path [-] (A) edge[color=violet, bend left=12] node {$5$} (F);
	  \path [-] (J) edge[color=violet, bend left=12] node {$5$} (E);
	  \path [-] (B) edge[bend right=20] node {$1$} (I);
	  \path [-] (C) edge[bend right=20] node {$1$} (H);
	  \path [-] (D) edge[bend left=20] node {$1$} (G);
   
\end{scope}
\end{tikzpicture}

\endminipage\hfill
\caption{The persistence diagram with four of the same dimension 1 persistence points and a graph representation of its underlying point cloud. Nodes on the graph represent points in the point clouds and edge weights are the Euclidean distance between the nodes the edges connect. Edge weights on the graph are in tenths.}\label{fig:uniform}
\end{figure}

\begin{figure}[hbt!]
  \centering
\minipage{0.45\textwidth}
	 \includegraphics[width=\linewidth]{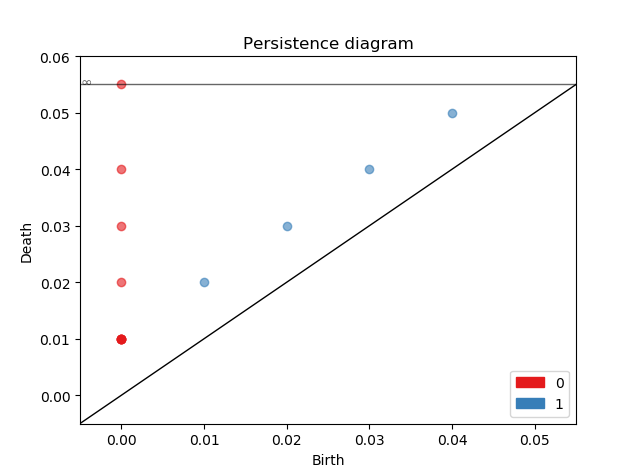}

\endminipage\hfill
\minipage{0.99\textwidth}
	 
\begin{tikzpicture}[scale=0.55]
\begin{scope}[every node/.style={circle,thick,draw}]
    \node (A) at (0,6) {A};
    \node (B) at (3,6) {B};
    \node (C) at (7,6) {C};
    \node (D) at (13,6) {D};
    \node (E) at (21,6) {E};
    \node (F) at (21,0) {F} ;
		\node (G) at (13,0) {G};
		\node (H) at (7,0) {H};
		\node (I) at (3,0) {I};
		\node (J) at (0,0) {J};
\end{scope}

\begin{scope}[>={Stealth[black]},
              every node/.style={fill=white,circle},
              every edge/.style={draw=red,very thick}]
    \path [-] (A) edge[color=red] node {$1$} (B);
    \path [-] (B) edge[color=blue] node {$2$} (C);
    \path [-] (C) edge[color=green] node {$3$} (D);
    \path [-] (D) edge[color=orange] node {$4$} (E);
    \path [-] (E) edge[color=red] node {$1$} (F);
    \path [-] (F) edge[color=orange] node {$4$} (G);
    \path [-] (G) edge[color=green] node {$3$} (H);
    \path [-] (H) edge[color=blue] node {$2$} (I);
    \path [-] (I) edge[color=red] node {$1$} (J);
    \path [-] (J) edge[color=red] node {$1$} (A);
    \path [-] (A) edge[color=blue] node {$2$} (I);
    \path [-] (A) edge[color=orange] node {$4$} (H);
    \path [-] (C) edge[color=orange] node {$4$} (J);
    \path [-] (A) edge[color=yellow] node {$8$} (G);
    \path [-] (D) edge[color=yellow] node {$8$} (J);
    \path [-] (B) edge[color=blue] node {$2$} (J);
    \path [-] (B) edge[color=red, bend right=17] node {$1$} (I);
    \path [-] (C) edge[color=red, bend right=40] node {$1$} (H);
    \path [-] (B) edge[color=green] node {$3$} (H);
    \path [-] (C) edge[color=green] node {$3$} (I);
    \path [-] (B) edge[color=cyan] node {$6$} (G);
    \path [-] (D) edge[color=cyan] node {$6$} (I);
    \path [-] (B) edge[color=magenta] node {$10$} (F);
    \path [-] (E) edge[color=magenta] node {$10$} (I);
    \path [-] (C) edge[color=orange] node {$4$} (G);
    \path [-] (D) edge[color=orange] node {$4$} (H);
    \path [-] (D) edge[color=violet] node {$5$} (F);
    \path [-] (G) edge[color=violet] node {$5$} (E);
    \path [-] (A) edge[color=pink, bend left=13] node {$11$} (F);
    \path [-] (E) edge[color=pink, bend left=13] node {$11$} (J);
    \path [-] (D) edge[color=red, bend left=55] node {$1$} (G);
    \path [-] (C) edge[color=yellow] node {$8$} (F);
    \path [-] (E) edge[color=yellow] node {$8$} (H);
   
\end{scope}
\end{tikzpicture}

\endminipage\hfill

\caption{The persistence diagram with four distinct dimension 1 persistence points and a graph representation of its underlying point cloud. Nodes on the graph represent points in the point clouds and edge weights are the Euclidean distance between the nodes the edges  connect. Edge weights on the graph are in tenths}\label{fig:nonuniform}
\end{figure}
Their corresponding persistence vectors in dimension 1 are exactly the same: 
$$ \langle 0.1, 0.1, 0.1, 0.1 \rangle $$
and yet the structure of each is significantly different. The first point cloud is quite uniformly spread out whereas the second point cloud is non-uniformly spread out. From the viewpoint of PH, the first data set's simplicial complex at filtration $\tau = 0.1$ has four loops and the other data set has four simplicial complexes in a row at $\tau = 0.1, 0.2,0.3,$ and $0.4$ with exactly one loop. 

Our aim with the interconnectivity vector is to eliminate these two issues of scale and topology seen in Wasserstein kernel methods and the persistence vectorization method. The interconnectivity vector will be a faithful representation of topology as it will be invariant to scale and it will more accurately represent the concurrence of homological features such as connected components, loops, vacuous volumes, etc. In addition, the interconnectivity vector is designed to be more computationally efficient than the Wasserstein kernel methods.

\section{The Interconnectivity Vector}

We will now introduce the interconnectivity vector. The interconnectivity vector is formed by counting the number of persistence points lying within a circle of radius equal to the persistence of each individual point. Note that a similar measure was presented in \cite{chung2009} where a persistence surface was generated by convolving a PD with the characteristic function of a disk of radius 0.2 around a fixed point in the diagram. Their surface, in effect, calculated the concentration of persistence points in that singular fixed disk. Our work can be seen as a specification and extension of this idea to every point in the diagram with topologically significant radii. The interconnectivity vector, rather than computing the concentration of persistence points in a specific area of the PD, will compute the extent to which the homological features (connected components, loops, vacuous volumes, etc.) coexist at every filtration level.

\subsection{Definition of the Interconnectivity Vector}
Let $B$ be a persistence diagram with finitely many, $N$, off-diagonal points $x_i = (b_i, d_i)$ viewed as a subset of $\mathbb{R}^2$, possibly with overlapping points. Let $d_{\Delta}(x_i)$ be the distance from $x_i$ to $\Delta$, the diagonal line $y = x$, using the $L_{\infty}$-norm. Then $d_{\Delta}(x_i) = || x_i - \Delta||_{\infty} = d_i - b_i$. Note that $d_i$ can not be smaller than $b_i$. 

Let $D_i$ be the disk defined by
$$ D_i = \{ (x,y) \in \mathbb{R}^2 : (x-b_i)^2 + (y-d_i)^2 < d_{\Delta}(x_i)^2\}$$
Every point off the diagonal which is in the intersection of $B$ and $D_i$ of $x_i$ corresponds to a homological feature that exists for at least some of the same filtration values as the feature corresponding to point $x_i$
\begin{figure}[hbt!]
  \centering
    \includegraphics[width=0.5\textwidth]{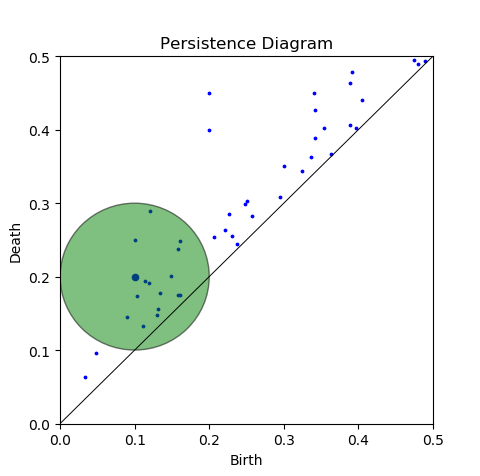}
\caption{A synthesized PD of fixed dimension on which a sample disk $D_i$ corresponding to the persistence point $x_i$ is drawn in green. Note that the radius of the disk $D_i$ is exactly the persistence of the point on the diagram that lies at the disk's center. Multiple persistence points exist inside $D_i$. }\label{fig:interpd}
\end{figure}
(see Figure \ref{fig:interpd}). Figure  \ref{fig:interpd} shows the schematic illustration of $D_i$ of $x_i$ in green and those persistence points that exist inside $D_i$. There are multiple persistence points inside $D_i$ for this example. On a barcode, these would be bars that overlap for some filtration values. 
For example, if $x_j$ lies in $D_i$ then $$(b_j - b_i)^2 + (d_j - d_i)^2 < d_{\Delta}(x_i)^2$$
But this implies that if $b_j > b_i$ (a similar argument holds for $b_j < b_i$)
$$b_j - b_i < d_{\Delta}(x_i)$$
$$ \implies b_j < b_i + d_{\Delta}(x_i)$$ But $b_i + d_{\Delta}(x_i) = b_i + (d_i - b_i) = d_i$ and so $b_j < d_i$. Thus the intersection of the persistence intervals $[b_i,d_i]$ and $[b_j, d_j]$ is nonempty. Note, however, that the reverse statement need not be true. The two persistence intervals $[b_i,d_i]$ and $[b_j,d_j]$ can overlap but $x_j$ need not lie in $D_i$. For example, consider the barcode 
\begin{figure}[hbt!]
  \centering
    \includegraphics[width=0.45\textwidth]{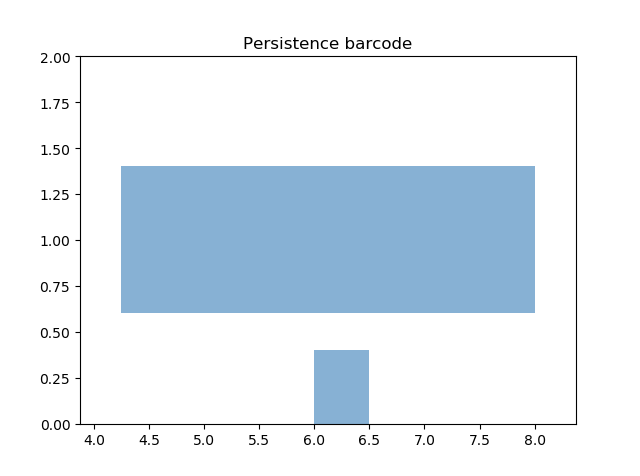}
\caption{A persistence barcode with overlapping persistence intervals but for which one point is not in the other point's disk.}\label{fig:barcodeoverlap}
\end{figure}
depicted in Figure \ref{fig:barcodeoverlap} with persistence intervals $[b_i, d_i] = [4.5,8]$ and $[b_j, d_j] = [6,6.5]$. Note that on the corresponding PD the distance between the points $(4.5,8)$ and $(6,6.5)$ using the Euclidean distance is $1.5\sqrt{2}$ which is greater than $0.5~  (= d_j - b_j)$ but less than $3.5~ (= d_i - b_i)$. Therefore the point $(6,6.5)$ is included in the disk of radius 3.5 centered at $(4.5,8)$ but the point $(4.5,8)$ is not in the disk of radius 0.5 centered at $(6,6.5)$. This is because the percentage of $[4.5,8]$ which is overlapped by $[6,6.5]$ is small in comparison to the lifetime of $[4.5,8]$. 

\begin{definition}
Let $B$ be a persistence diagram with $N$ off-diagonal points. Define the interconnectivity vector $\vec{v}$ to be the vector of length $N$ whose $i$th entry $v_i$ is the cardinality of $B \cap D_i$ as defined above.
\end{definition} 

\begin{remark} Note that $v_i \geq 1$ for all $i \in \{1, ... , N\}$. 
\end{remark}

For computations later on, we permute the entries of this vector so that $v_i \geq v_{i+1}$ for all $i = 1, \cdots, N-1$. Note that this vector can be recovered directly from the barcode. For each entry $v_i$ corresponding to the bar beginning at $b_i$ and ending at $d_i$ one needs to only count the number of bars whose birth and death satisfy the following condition
$$(b_i - birth)^2 + (d_i - death)^2 < (d_i - b_i)^2$$

The following two examples show a simple demonstration of the interconnectivity vector. 
\begin{example} \label{example:sliding} Consider the sliding window embeddings of $f(x) = \cos(x)$ and $g(x) = 3\cos(x)$, $SW_{M,\tau}: f(x)\in \mathbb{R} \rightarrow f^{sw} \in \mathbb{R}^{M+1}$, 
$ f^{sw} = (f(x), f(x+\tau), \cdots, f(t+M\tau))^T$ (same for $g(x)$),  generated using 100 points with window size $M\tau$ where $M = 2$ and $\tau = 6$ \cite{perea2013sliding}. Figure \ref{fig:slidingpics} contains their respective point clouds, red for $f(x)$ and blue for $g(x)$, respectively (left figure) and PDs for $f(x)$ (middle) and $g(x)$ (right). Note that in PDs, those red and blue points correspond to the zero- and one-dimensional homology, respectively. 
\begin{figure}[hbt!]
  \centering
\minipage{0.329\textwidth}
    \includegraphics[width=\linewidth]{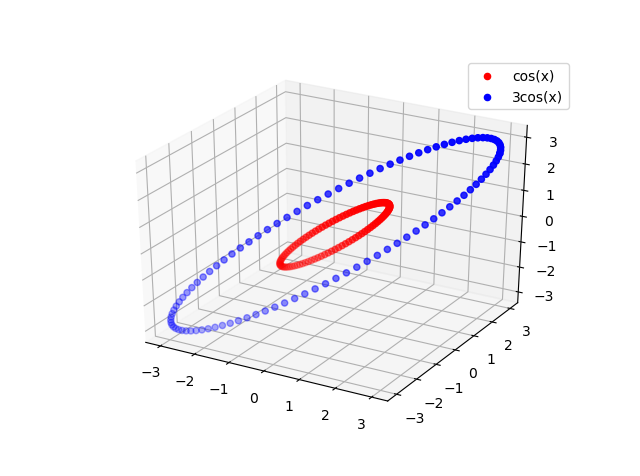}

\endminipage\hfill
\minipage{0.329\textwidth}
	 \includegraphics[width=\linewidth]{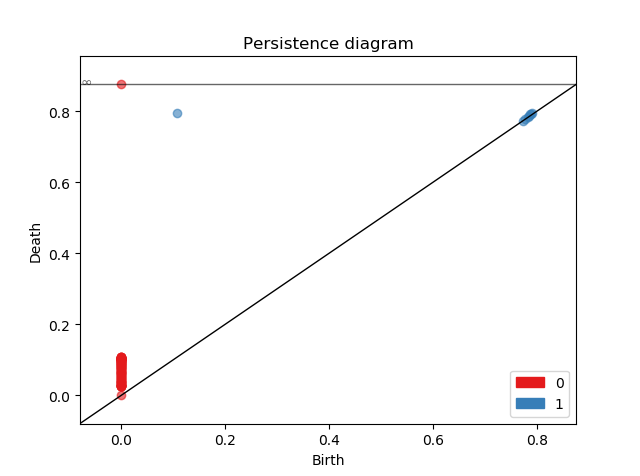}

\endminipage\hfill
\minipage{0.329\textwidth}
	 \includegraphics[width=\linewidth]{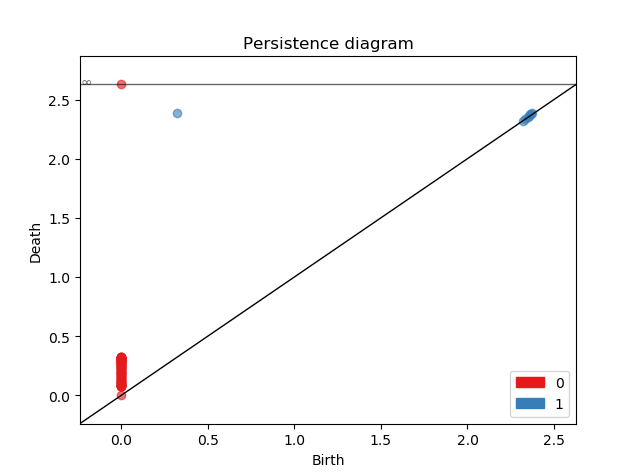}	

\endminipage\hfill

\caption{Left: Sliding window embeddings of $f(x) = \cos(x)$ (red) and $g(x) = 3\cos(x)$ (blue) with parameters $M = 2$ and $\tau = 6$. 
Middle: The persistence diagram for $f(x)$. Right: The persistence diagram for 
$g(x)$. 
}\label{fig:slidingpics}
\end{figure}
The PDs of each point cloud contain eight points in dimension $1$ and so the interconnectivity vector will be a vector of length eight.  For {\it both} point clouds the interconnectivity vector in dimension 1 is
$$\vec{v} = \langle 8, 3, 2, 2, 2, 1, 1, 1\rangle$$
Here the interconnectivity vector correctly indicates that the topology of the two point clouds are the same regardless of the scale, which, however, is not true of the topological vector described in the previous section. 
\end{example}

\begin{example}\label{random3D} Consider a uniform random 3D point cloud pictured in Figure \ref{fig:rndpics} with 100 points. 
\begin{figure}[hbt!]
  \centering
    \includegraphics[width=0.45\textwidth]{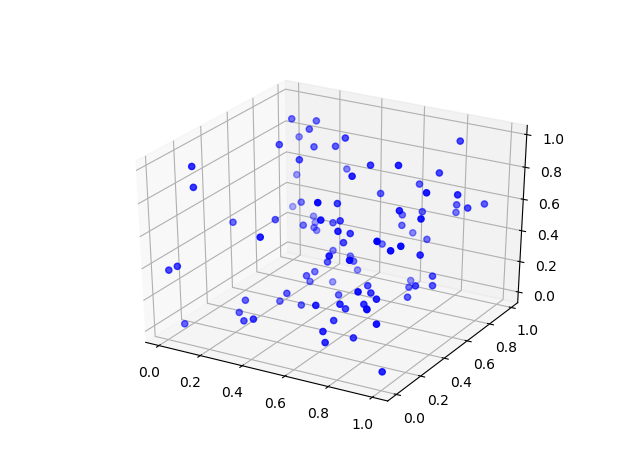}
	 \includegraphics[width=0.45\textwidth]{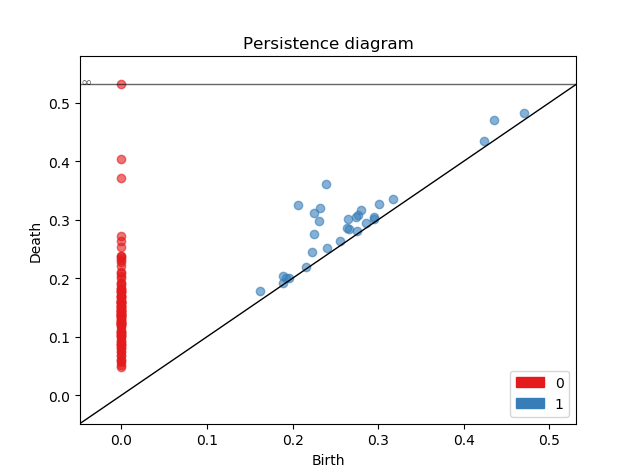}
\caption{Left, a uniform random point cloud in 3-space generated by Numpy's random.rand() command. Right, the corresponding persistence diagram generated by GUDHI.}\label{fig:rndpics}
\end{figure}
Again, working with the 1-dimensional homology, the corresponding PD contains 30 points and so the interconnectivity vector is the following vector of length 30:
$$ \vec{v}(B) = \langle 22, 22, 17, 17,  7,  5,  4,  3,  3,  3,  3,  2,  1, 1,  1,  1,  1,  1,  1,  1,  1,  1,  1,  1,  1,  1, 1,  1,  1,  1\rangle$$

\end{example}

\subsection{Instability}

The interconnectivity vector, as described above, is not stable with respect to the Wasserstein-1 metric as a slight change in a birth-death pair could result in a large change in the vector. Recall the definition of the $p$-Wasserstein distance between persistence diagrams \cite{wasserstein:1969}.

\begin{definition}\label{wasserstein} The $p$-Wasserstein distance between persistence diagrams $X$ and $Y$ is
$$W_p(X,Y) = \left[ \inf_{\eta:X \to Y} \sum_{x \in X} || x - \eta(x)||_{\infty}^p \right]^{1/p}$$
where $\eta:X \to Y$ is a partial matching of $X$ and $Y$. 
\end{definition}
Note that as $p \to \infty$ this distance becomes the {\it bottleneck distance}. 

\begin{thm}
Let $B$ and $B'$ be persistence diagrams containing only finitely many off-diagonal points. The interconnectivity vector is \textbf{not stable} with respect to the 1-Wasserstein distance. 
\end{thm}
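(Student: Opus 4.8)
The statement to be proven is a \emph{negative} one, so the plan is to refute stability directly by exhibiting an explicit family of pairs of diagrams. Concretely, ``stability with respect to $W_1$'' would mean a bound of the form $\|\vec v(B) - \vec v(B')\|_\infty \le C\, W_1(B,B')$ for some fixed constant $C$ (or, more weakly, that $\|\vec v(B)-\vec v(B')\|_\infty \to 0$ whenever $W_1(B,B') \to 0$). To negate this it suffices to produce, for every prospective constant, a pair $B, B'$ for which the ratio $\|\vec v(B)-\vec v(B')\|_\infty / W_1(B,B')$ is as large as we like; equivalently, a family with $W_1(B,B') \to 0$ while $\|\vec v(B)-\vec v(B')\|_\infty$ stays bounded away from zero. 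The mechanism I would exploit is the one already flagged in the discussion of Figure~\ref{fig:barcodeoverlap}: membership of a point in a disk $D_i$ is a \emph{discontinuous}, all-or-nothing condition, and the radius of $D_i$ equals the persistence $d_\Delta(x_i)=d_i-b_i$, so an arbitrarily small change in a single birth--death pair changes that radius and can sweep other points across the disk boundary.

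I would build the counterexample from one ``hub'' point together with $k$ ``satellite'' points. Fix an integer $k$ and a small $\epsilon>0$. Let $B$ contain a hub $x_0=(b_0,d_0)$ of large persistence $h=d_0-b_0$, so $D_0$ is the disk of radius $h$ about $x_0$, together with $k$ satellite points of \emph{tiny} persistence placed just \emph{outside} the boundary circle of $D_0$, clustered near the point where that circle runs closest to the diagonal. Let $B'$ be obtained from $B$ by perturbing \emph{only} the hub away from the diagonal, $x_0 \mapsto x_0'=(b_0-\epsilon,\,d_0+\epsilon)$, leaving all $k$ satellites fixed. This increases the hub's persistence from $h$ to $h+2\epsilon$; since the radius then grows by $2\epsilon$ while the center moves only by $\epsilon$ in the $L_\infty$-norm (hence by at most $\epsilon\sqrt2$ in the Euclidean metric), the enlarged disk strictly contains $D_0$ and covers a thin crescent just outside the old boundary, which we arrange to contain all $k$ satellites at once.

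It then remains to compare the two relevant quantities. For the Wasserstein side, $B$ and $B'$ differ only in the location of the hub, so the optimal partial matching in Definition~\ref{wasserstein} matches the hub to its image and every other point to itself, giving $W_1(B,B') = \|x_0 - x_0'\|_\infty = \epsilon$. For the vector side, the only entry of $\vec v$ that changes is the hub's: it counts $B \cap D_0$, which jumps from $1$ (the hub alone) in $B$ to $1+k$ in $B'$. Every satellite entry is unchanged, because each satellite has tiny persistence, its disk is far too small to contain the hub either before or after the perturbation, and moving the hub does not alter the mutual positions of the satellites. Choosing the satellite persistences small and their mutual overlaps negligible keeps every entry of $\vec v(B)$ of size $O(1)$, so after reordering into decreasing order the two vectors (which have the same length $N=k+1$) satisfy $\|\vec v(B)-\vec v(B')\|_\infty \approx k$.

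Letting $\epsilon \to 0$ with $k$ fixed drives $W_1(B,B') = \epsilon \to 0$ while $\|\vec v(B)-\vec v(B')\|_\infty \approx k$ stays fixed, and taking $k$ arbitrarily large makes the defect as severe as desired; under either reading no stability inequality can hold, which proves the theorem. The main obstacle I anticipate is purely in the bookkeeping of the construction: I must choose the hub persistence and the satellite positions so that a single $\epsilon$-move of the hub simultaneously pulls \emph{all} $k$ satellites across the boundary of $D_0$ while leaving \emph{every} other entry of the vector untouched, and I must confirm that the reordered vectors of equal length admit a clean $L_\infty$ comparison in which the hub's jump of size $k$ is genuinely realized rather than masked by the sorting. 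Verifying these geometric facts --- that $k$ low-persistence points can be packed into the thin crescent of width of order $\epsilon$ that the enlarged disk newly covers, and that none of their small disks ever contains the hub --- is the one place that requires care, though it is elementary once the picture is set up.
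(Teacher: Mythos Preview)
Your proposal is correct and rests on the same mechanism the paper exploits: disk membership is a discontinuous predicate, so an arbitrarily small move of one point can flip the count by an integer while $W_1$ shrinks to zero. The constructions differ in scale and emphasis. The paper uses the minimal configuration---just two points $x_1=(1,2)$ and $x_2$ placed so that their mutual Euclidean distance is $1-\epsilon/2$, with both persistences equal to $1$; nudging $x_2$'s death up by $\sqrt{2}\epsilon/2$ makes $x_2$ fall out of $x_1$'s disk while staying in its own, so the sorted vector drops from $\langle 2,2\rangle$ to $\langle 2,1\rangle$ and the ratio blows up like $\sqrt{2}/\epsilon$. Your hub--and--satellites picture instead perturbs a single large-persistence point so that its growing disk swallows $k$ tiny-persistence points at once, producing a jump of size $k$ in the hub entry.

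What each buys: the paper's two-point example is fully explicit (coordinates written down, no packing argument needed) and already suffices since a jump of $1$ over a cost of order $\epsilon$ defeats any fixed $C$. Your version demonstrates something quantitatively stronger---the discontinuity can be of arbitrary integer size in a single $\epsilon$-move---at the price of the geometric bookkeeping you flag (placing $k$ well-separated, near-diagonal satellites in the crescent $D_0'\setminus D_0$). That bookkeeping is genuine but routine: since $\|x_0-x_0'\|_2=\epsilon\sqrt{2}<2\epsilon$ the new disk strictly contains the old, and the crescent meets the region just above the diagonal near the two points where $\partial D_0$ crosses it, so $k$ points of persistence $\rho$ spaced more than $\rho$ apart fit there for $\rho$ small enough. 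With that arrangement the sorted vectors are exactly $\langle 1,\dots,1\rangle$ and $\langle k{+}1,1,\dots,1\rangle$, so the sorting does not mask the jump. In short, both proofs are valid; yours is a more elaborate witness to the same discontinuity, while the paper opts for the smallest example that already does the job.
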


\begin{proof} We prove by example. For simplicity, assume the persistence diagram $B$ contains only two points off the diagonal: $x_1 = (1,2)$ and $x_2 = (b,d)$. For any $0 < \epsilon < 2$ let 
$$b = \sqrt{2}\left(\frac{1}{2} - \frac{\epsilon}{4}\right) + 1, \quad d = \sqrt{2}\left(\frac{1}{2} - \frac{\epsilon}{4}\right) + 2$$
Further, let $D_1$ and $D_2$ be the following disks 
$$D_1 = \{ (x,y) \in \mathbb{R}^2 : (x - 1)^2 + (y-2)^2 < 1\}$$ 
$$D_2 = \{ (x,y) \in \mathbb{R}^2 : (x - b)^2 + (y-d)^2 < 1\}$$

Then the Euclidean distance between $x_1$ and $x_2$ is 
$$ \text{dist}(x_1,x_2) = 1 - \frac{\epsilon}{2} < 1$$
Since the radii of $D_1$ and $D_2$ are equal and $\text{dist}(x_1,x_2) < 1$ the interconnectivity vector for the persistence diagram $B$ is
$$\vec{v}(B) = \langle 2,2 \rangle$$

\bigskip

Now consider $B'$, the persistence diagram with two points $y_1 = (1,2)$ and $y_2 = (b,d')$ where $b$ and $d$ are the same as in the case of $B$ above and $$d' = d + \frac{\sqrt{2}\epsilon}{2}= \sqrt{2}\left(\frac{1}{2} + \frac{\epsilon}{4}\right) + 2$$
Further, let $D'_1$ and $D'_2$ be the following disks 
$$D'_1 = \Bigg\{ (x,y) \in \mathbb{R}^2 : (x - 1)^2 + (y-2)^2 < 1\Bigg\}$$ 
$$D'_2 = \Bigg\{ (x,y) \in \mathbb{R}^2 : (x - b)^2 + (y-d')^2 < \left(1 + \frac{\sqrt{2}\epsilon}{2}\right)^2\Bigg\}$$
The Euclidean distance between $y_1 = (1,2)$ and $y_2 = (b,d')$ is for $\epsilon>0$
$$ \text{dist}(y_1,y_2)  =  \sqrt{1 + \frac{\epsilon^2}{4}} <1 + \frac{\epsilon}{\sqrt{2}}$$

Note that for $0 < \epsilon < 2$, the distance between $y_1$ and $y_2$ is less than the persistence $d'-b = 1 + \frac{\sqrt{2}\epsilon}{2}$. Since the radius of $D'_1 = 1$ and the radius of $D'_2 = 1 + \frac{\sqrt{2}\epsilon}{2}$ but $\text{dist}(y_1,y_2)  <  1 + \frac{\sqrt{2}\epsilon}{2}$ the interconnectivity vector for the persistence diagram $B'$ is 
$$\vec{v}(B') = \langle 2,1 \rangle$$

In order to be stable with respect to the 1-Wasserstein distance, the interconnectivity vector should fulfill the following condition: 
$$ || \vec{v}(B) - \vec{v}(B') ||_{\infty} \leq C \cdot W_1(B,B') $$
for some non-negative constant $C$.
The $W_1(B,B')$ (Definition \ref{wasserstein}) is given by
$$W_1(B,B') = \inf_{\eta:B \to B'} \sum_{x \in B} ||x - \eta(x)||_{\infty}$$
where $\eta$ is a partial matching of $B$ and $B'$ and so 
\begin{align*} 
|| \vec{v}(B) - \vec{v}(B') ||_{\infty} &\leq C \cdot W_1(B,B') \\
&= C ||(b,d) - (b,d')||_{\infty}\\
&= C\frac{\sqrt{2}\epsilon}{2}
\end{align*} 
Since $ || \vec{v}(B) - \vec{v}(B') ||_{\infty}  = ||\langle 2,2 \rangle - \langle 2,1 \rangle ||_{\infty} = 1$ we have that $C \geq \frac{\sqrt{2}}{\epsilon}$. However, as $\epsilon > 0$ can be arbitrarily small, there does not exist such a constant C. Hence, the interconnectivity vector is not stable with respect to the 1-Wasserstein distance. 
\end{proof}

\section{The Stable Interconnectivity Vector}

In this section we will retool the definition of the interconnectivity vector to produce a version which is stable under the 1-Wasserstein distance. The general idea behind the stabilization is to do away with the duality of a persistence point being in or not in the disk $D_i$ and in its place we will weight each point inversely proportional to is distance from each persistence point. This stabilization is similar to those of \cite{zielinski2018persistence, Pachauri2011}. 

In \cite{zielinski2018persistence} a clustering algorithm (K-Means or Gaussian Mixture Model) is used to divide a PD into a desired number of {\it codebooks} in which each persistence point is counted to form a histogram vector. Their first approach is proven unstable and they apply a series of Gaussian distributions (derived from the Gaussian Mixture Model) on each codebook. The authors create a new stabilized {\it Persistence Bag of Words} vector by calculating the weighted average probability that persistence point $x_i$ was generated by each Gaussian distribution. 

In \cite{Pachauri2011} a concentration map of a PD is generated on a uniform grid of size $[0,5]\times[0,5]$ and then smoothed with a Gaussian kernel of bandwith 0.2. A vector is then created by computing the PDF of the concentration map.

For our stabilized version of the interconnectivity vector we will use this Gaussian smoothing of the PD idea. In particular, we will center a Gaussian distribution at each off-diagonal persistence point in the diagram and force the covariance of the distribution to depend on the persistence of that persistence point. Our vector will then be calculated by averaging the probability that the persistence point $x_i$ is generated by each Gaussian centered at the points in the PD. In effect, we will be calculating the concentration of persistence points which are ``close" to each off-diagonal point on the PD. 

\subsection{Definition of the Stable Interconnectivity Vector}
Suppose $B$ is a persistence diagram with $N$ off-diagonal points $x_i = (b_i,d_i)$. Let $\{ G_i ; \mu_i, \Sigma_i\}_{i = 1}^{N}$ be a collection of Gaussian distributions \cite{zielinski2018persistence} where $\mu_i$, the mean of $G_i$, is equal to $x_i \in B$ and where $\Sigma_i$, the covariance matrix for $G_i$, is 
 \[\Sigma_i = \begin{pmatrix} 
d_i-b_i + \delta& 0 \\
0 & d_i-b_i + \delta
\end{pmatrix} \]
Here $\delta$ is a small regularization term which is useful in the case that $d_i - b_i$ is arbitrarily small (one could do away with $\delta$ if we instead sample the PD and keep only those points with significant persistence).
Define $w_i$ to be $$w_i = \frac{1}{N}$$
Note that each $w_i > 0$ and $\sum_{i = 1}^{N} w_i = 1$.

Then the stable interconnectivity vector is
$$ \vec{v}^s = \left( v_i^s = w_i  \sum_{j =1}^{N} p_i(x_j|\mu_i,\Sigma_i)\right)_{i = 1}^{N}$$
where $$p_i(x_j | \mu_i,\Sigma_i) = \frac{\exp\left(-\frac{1}{2}(x_j - \mu_i)^T \Sigma_i^{-1}(x_j - \mu_i)\right)}{2\pi |\Sigma_i|^{1/2}}$$
is the probability density function of Gaussain $G_i$ at $x_j$ \cite{zielinski2018persistence}. Note that by $|\Sigma_i|$ we mean the determinant of the covariance matrix. 

\begin{example} Returning to the sliding window embedding of $f(x) = \cos(x)$ in Example \ref{example:sliding}, recall that the persistence diagram in dimension 1 for this example had exactly eight off-diagonal points seven of which have persistence on the order of $10^{-3}$ or smaller and one point with persistence equal to $6.89 \times 10^{-1}$. The stable interconnectivity vector for this persistence diagram of dimension 1 using parameter value $\delta = 0.5$ is
%
\begin{eqnarray}
\vec{v}^s(B) =& \langle& 0.30386786, 0.30328569, 0.30269939, 0.30212169, 
\nonumber \\
& & 0.30156299, 0.30103144, 0.30053287, 0.11332091 \rangle \nonumber
\end{eqnarray}
\end{example}

\subsection{Stability}
The stable interconnectivity is so named as it is stable with respect to the 1-Wasserstein distance. This means that given a small change in the persistence diagram causes at most a small change in the vector. We now state this as a theorem and provide a proof.

\begin{thm}\label{thm:stability}
Let $B$ be a persistence diagram of finite size and $B'$ be the persistence diagram obtained by perturbing $B$ by an arbitrary $\epsilon > 0$ such that $W_1(B,B') \leq \epsilon$. Then there exists a non-negative constant $C < \infty$ for any $\epsilon$ such that  
$$||\vec{v}^s(B) - \vec{v}^s(B')||_{\infty} \leq C \cdot W_1(B,B')$$
\end{thm}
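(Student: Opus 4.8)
The plan is to reduce the $L_\infty$ bound on the vector difference to a pointwise Lipschitz estimate for a single Gaussian, where the decisive fact is that the regularization $\delta$ keeps every covariance bounded away from zero. Writing $s_i = d_i - b_i + \delta \ge \delta$, each covariance is the scalar matrix $\Sigma_i = s_i I$, so the density simplifies to
$$ p_i(x_j\mid\mu_i,\Sigma_i) = \frac{1}{2\pi s_i}\exp\!\left(-\frac{\|x_j - x_i\|^2}{2 s_i}\right) =: g(x_j; x_i, s_i), $$
and $v_i^s = \frac1N\sum_{j=1}^N g(x_j;x_i,s_i)$. Reading \emph{perturbing} as moving the points of $B$, so that $B'$ has the same number $N$ of off-diagonal points, I would fix the optimal bijective matching $\eta$ realizing $W_1(B,B')$; after relabeling so that $x_i' = \eta(x_i)$, this gives $\sum_{i=1}^N \|x_i - x_i'\|_\infty = W_1(B,B')$. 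I would then align the $i$-th entries through $\eta$ (equivalently, since the entries may be sorted by convention, invoke that the monotone matching minimizes the $L_\infty$ distance and so only helps), reducing the claim to an entrywise bound on $|v_i^s(B) - v_i^s(B')|$ followed by a maximum over $i$.

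The core step is a joint Lipschitz estimate for $g$ on the region $s \ge \delta$: I would show there are finite constants $L_x, L_\mu, L_s$, \emph{depending only on $\delta$}, with
$$ |g(x;\mu,s) - g(x';\mu',s')| \le L_x\|x - x'\| + L_\mu\|\mu-\mu'\| + L_s\,|s - s'|. $$
Each constant comes from a uniform bound on a partial derivative of $g$: $\nabla_x g = -s^{-1}(x-\mu)g$ is controlled by $\sup_{t\ge 0} t\,e^{-t^2/(2s)}$, the $\mu$-derivative is identical by symmetry, and $\partial_s g$ is a combination of $\frac{1}{s}g$ and $\frac{\|x-\mu\|^2}{s^2}g$; all of these are bounded by powers of $1/\delta$ precisely because $s\ge\delta>0$. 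This is where the $\delta$-regularization is indispensable: without it a point near the diagonal would produce a near-degenerate Gaussian whose normalization $1/(2\pi s)$ and derivatives blow up, and no finite $C$ could exist. Finally I would record that the width moves no faster than the point, namely $|s_i - s_i'| = |(d_i-b_i)-(d_i'-b_i')| \le 2\|x_i - x_i'\|_\infty$.

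Assembling these, for each fixed $i$,
$$ |v_i^s(B) - v_i^s(B')| \le \frac1N\sum_{j=1}^N\big| g(x_j;x_i,s_i) - g(x_j';x_i',s_i')\big| \le \frac1N\Big[L_x\!\sum_{j}\|x_j - x_j'\| + (L_\mu + 2L_s)\,N\,\|x_i - x_i'\|_\infty\Big]. $$
Using $\|x_j - x_j'\|\le\sqrt2\,\|x_j-x_j'\|_\infty$ gives $\sum_j\|x_j-x_j'\|\le\sqrt2\,W_1(B,B')$, while $\|x_i-x_i'\|_\infty\le W_1(B,B')$, so the factor $N$ cancels the weight $1/N$ and
$$ |v_i^s(B)-v_i^s(B')| \le \Big(\tfrac{\sqrt2 L_x}{N} + L_\mu + 2L_s\Big)W_1(B,B') \le C\, W_1(B,B'), $$
with $C = \sqrt2\,L_x + L_\mu + 2L_s$ independent of $N$, of the diagram, and of $\epsilon$. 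Taking the maximum over $i$ yields the claim.

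I expect the main obstacle to be the joint Lipschitz estimate, and specifically the term from the \emph{varying covariance}: unlike a standard fixed-bandwidth kernel-density argument, here the width $s_i$ changes together with its center, so I must differentiate $g$ in $s$ and then convert $|s_i - s_i'|$ back into point displacement, while keeping every derivative bound expressed in $\delta$ alone so that $C$ stays uniform. A secondary subtlety, should one drop the equal-cardinality reading of \emph{perturbing}, is that the weight $w_i = 1/N$ is global, so a change in $N$ shifts every entry at once; handling created or destroyed points would then require pairing them with the diagonal at cost equal to their (small) persistence and separately absorbing the $1/N$-versus-$1/N'$ discrepancy, which I would treat as an additional case.
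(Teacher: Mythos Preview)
Your argument is correct and in fact more complete than the paper's. The paper's proof fixes the Gaussian parameters $\mu_i,\Sigma_i$ coming from $B$ and applies the Lipschitz bound for $p_i(\,\cdot\mid\mu_i,\Sigma_i)$ only to the change of evaluation points $x_j\mapsto\eta(x_j)$, writing $v_i^s(B')=w_i\sum_j p_i(\eta(x_j)\mid\mu_i,\Sigma_i)$ and ending with $C=\max_i w_iL_i$. This silently ignores that in $\vec v^{\,s}(B')$ the centers $\mu_i'=x_i'$ and widths $s_i'=d_i'-b_i'+\delta$ also move; your joint Lipschitz estimate in $(x,\mu,s)$, together with the conversion $|s_i-s_i'|\le 2\|x_i-x_i'\|_\infty$, is exactly what is needed to close that gap. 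You also make the role of $\delta$ explicit: the derivative bounds are uniform precisely because $s\ge\delta>0$, so your constant depends only on $\delta$, whereas the paper's $C=\max_i w_iL_i$ implicitly depends on the particular diagram through the individual $s_i$. Your side remarks on aligning entries through the optimal matching (versus the sorted convention) and on reading ``perturbing'' as an equal-cardinality bijection are apt; the paper tacitly assumes both without comment.
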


\begin{proof} Let $N$ be the number of off-diagonal points in $B$. Let $\eta: B \to B'$ be the partial matching that realizes the 1-Wasserstein distance between $B$ and $B'$. For a fixed $i \in \{1,...,N\}$. we have
\begin{align*} 
||v_i^s(B) - v_i^s(B')||_{\infty} &= ||w_i \sum_{j = 1}^{N} \left(p_i(x_j| \mu_i, \Sigma_i) - p_i(\eta(x_j) | \mu_i,\Sigma_i)\right)||_{\infty}\\
& \leq w_i \sum_{j = 1}^{N} ||p_i(x_j | \mu_i,\Sigma_i) - p_i(\eta(x_j) | \mu_i,\Sigma_i)||_{\infty}\\
\end{align*}
As $p_i: \mathbb{R}^2 \to \mathbb{R}$ is continuously differentiable it is also Lipschitz continuous with Lipschitz constant $L_i$. We get
\begin{eqnarray}
w_i \sum_{j = 1}^{N} ||p_i(x_j | \mu_i,\Sigma_i) - p_i(\eta(x_j) | \mu_i,\Sigma_i)||_{\infty} &\leq& w_i \sum_{j = 1}^{N} ||L_i(x_j - \eta(x_j))||_{\infty}\nonumber \\
 &= &w_i L_i \sum_{j = 1}^{N} ||x_j - \eta(x_j)||_{\infty}
 \nonumber \\
 &= & w_i L_i  W_1(B, B') \nonumber 
\end{eqnarray}
If we let $$C = \max_{i \in [1,...,N]} w_i L_i$$ we have the desired result.

\end{proof}

\begin{example}

Returning to the example used to show that the original interconnectivity vector was unstable, we will now show that Theorem \ref{thm:stability} is satisfied for this example.

Let $B$ be the persistence diagram that contains two off-diagonal points: $x_1 = (1,2)$ and $ x_2 = (b,d)$, where $b$ and $d$ are the same as before 
$$b = \sqrt{2}\left(\frac{1}{2} - \frac{\epsilon}{4}\right) + 1, 
\quad
 d = \sqrt{2}\left(\frac{1}{2} - \frac{\epsilon}{4}\right) + 2$$
for any $0 < \epsilon < 2$.
%
%
%
%
The stable interconnectivity vector for the persistence diagram $B$ is easily computed as the vector $\vec{v}^S(B) = \langle v^S_1, v^S_2 \rangle ^T$ where
$$ v^S_1 = v^S_2 =  \frac{1}{4 \pi (\delta+1)} \left[ 1 + \exp\left(-\frac{2}{\delta + 1} \left(\frac{1}{2} - \frac{\epsilon}{4}\right)^2 \right) \right]$$
Now let $B'$ be the persistence diagram that contains two off-diagonal points: $y_1 = (1,2)$ and $y_2 = (b,d')$, where 
$$b = \sqrt{2}\left(\frac{1}{2} - \frac{\epsilon}{4}\right) + 1, \quad 
d' = \sqrt{2}\left(\frac{1}{2} + \frac{\epsilon}{4}\right) + 2$$
for any $0 < \epsilon < 2$.
The stable interconnectivity vector for the persistence diagram $B'$ is also easily computed as the vector $\vec{v}^{S} (B') = \langle v^S_1, v^S_2 \rangle ^T$ where
\begin{align*} v^S_1 &= w_1 \left[ p_1(y_1 | \mu_1,\Sigma_1) + p_1(y_2| \mu_1,\Sigma_1) \right]\\
 &= \frac{1}{4\pi(\delta+1)} \left[ 1 +\exp \left(-\frac{1}{\delta +1} \left[ \left(\frac{1}{2} - \frac{\epsilon}{4}\right)^2 + \left(\frac{1}{2} + \frac{\epsilon}{4}\right)^2 \right] \right)\right]
\end{align*}
and 
\begin{align*} v^S_2 &= w_2 \left[ p_2(y_1| \mu_2,\Sigma_2) + p_2(y_2 | \mu_2,\Sigma_2) \right]\\
 &= \frac{\alpha}{2\pi} \left[ \exp \left(- 2\alpha\left[ \left(\frac{1}{2} - \frac{\epsilon}{4}\right)^2 + \left(\frac{1}{2} + \frac{\epsilon}{4}\right)^2 \right] \right)+ 1\right]
\end{align*}
where $\alpha = \frac{1}{2 + \sqrt{2}\epsilon + 2\delta}$.
\bigskip
Thus, we obtain 

$$\vec{v}^S(B) - \vec{v}^S(B')   =$$
\begingroup
\tiny
$$
\begin{bmatrix} \frac{1}{4\pi (\delta+1)} \left[ \exp\left( -\frac{2}{\delta+1} \left(\frac{1}{2} - \frac{\epsilon}{4}\right)^2\right) -  \exp \left(-\frac{1}{\delta+1} \left[ \left(\frac{1}{2} - \frac{\epsilon}{4}\right)^2 + \left(\frac{1}{2} + \frac{\epsilon}{4}\right)^2 \right] \right)\right] \\ \\
\frac{1}{2\pi} \left[  \frac{1-2\alpha(\delta+1)}{2(\delta + 1)} + \frac{1}{2(\delta+1) }\exp\left(-\frac{2}{\delta +1} \left(\frac{1}{2} - \frac{\epsilon}{4}\right)^2\right) - \alpha \cdot \exp\left( -2\alpha \left[ \left(\frac{1}{2} - \frac{\epsilon}{4}\right)^2 + \left(\frac{1}{2} + \frac{\epsilon}{4}\right)^2\right]\right) \right]
\end{bmatrix}
$$
\endgroup

Note that for all $0 < \epsilon < 2$, the absolute value of the first entry in the vector above, pictured in red in Figure \ref{fig:stableexample}, is less than or equal to that of the second entry in the vector, pictured in blue. 
\begin{figure}[hbt!]
  \centering
    \includegraphics[width=0.49\textwidth]{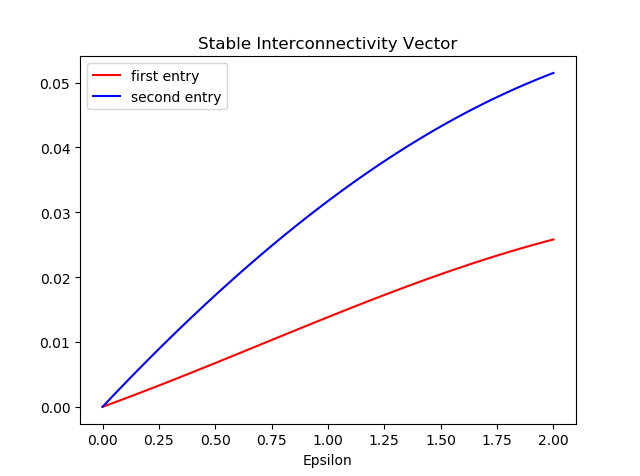}
	\includegraphics[width=0.49\textwidth]{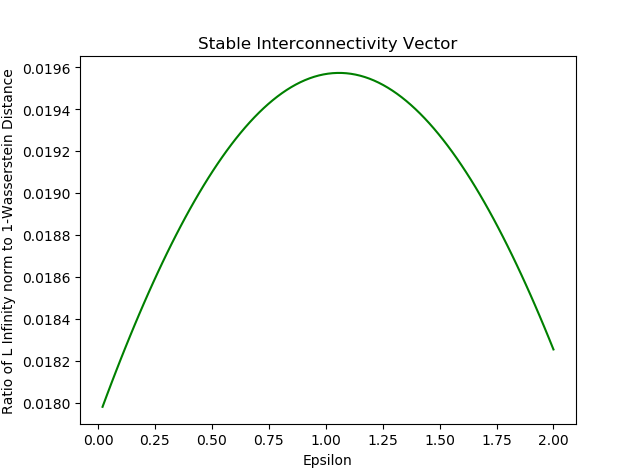}
\caption{Left: graph of the change in the stable interconnectivity vector versus the amount of perturbation $\epsilon$ using $\delta = 0.5$. The first vector entry is in red and the second is in blue. Right: graph of the ratio of the $L_{\infty}$-norm of the change in the stable interconnectivity vector to the 1-Wasserstein distance between PDs, $B$ and $B'$.}\label{fig:stableexample}
\end{figure}
Thus we obtain

$$|| \vec{v} - \vec{v} \, ' ||_{\infty} =$$
\begingroup
\tiny
$$ \frac{1}{2\pi} \Bigg| \frac{1-2\alpha(\delta+1)}{2(\delta + 1)} + \frac{1}{2(\delta+1) }\exp\left(-\frac{2}{\delta +1} \left(\frac{1}{2} - \frac{\epsilon}{4}\right)^2\right) - \alpha \cdot \exp\left( -2\alpha \left[ \left(\frac{1}{2} - \frac{\epsilon}{4}\right)^2 + \left(\frac{1}{2} + \frac{\epsilon}{4}\right)^2\right]\right) 
\Bigg|
$$
\endgroup

Recall, in order to satisfy Theorem \ref{thm:stability} we need
\begin{align*} ||\vec{v}^S(B) - \vec{v}^S(B')||_{\infty} &\leq C \cdot W_1(B, B')\\
 &= C ||(b,d) - (b, d')||_{\infty}\\ &= C \frac{\sqrt{2}\epsilon}{2}
\end{align*}
This means that we need to find a constant $C$ such that
$$ \frac{\sqrt{2}}{\epsilon}\cdot||\vec{v}^S(B) - \vec{v}^S(B')||_{\infty} \leq C$$
In Figure \ref{fig:stableexample} there is the graph the left hand side of the inequality above versus $\epsilon$. The graph attains a maximum value of approximately $0.01957236318939473$ when $\epsilon$ equals $1.\bar{03}$.
Thus there exists $C$, say,  $C = 0.02$, so that 
$$||\vec{v}^S(B) - \vec{v}^S(B')||_{\infty} \leq C \frac{\sqrt{2}\epsilon}{2}$$
and Theorem \ref{thm:stability} is satisfied.

\end{example}

Consider Example \ref{random3D} where we have 150 uniformly randomly selected points in the unit cube. Now perturb one randomly chosen point by a small amount $\epsilon$ and let us compare the effects of this small perturbation on the $L_{\infty}$ norm of the difference in each version of the interconnectivity vector. 
\begin{figure}[hbt!]
  \centering
    \includegraphics[width=0.45\textwidth]{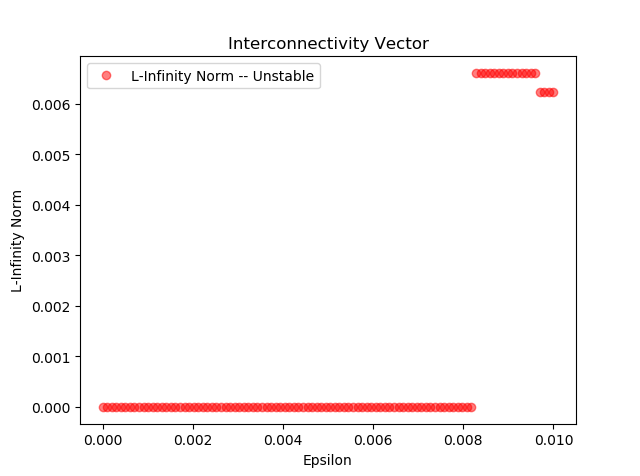}
	 \includegraphics[width=0.45\textwidth]{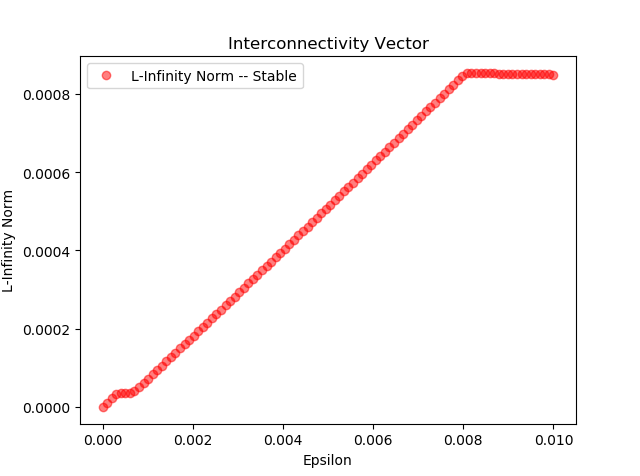}
    \includegraphics[width=0.45\textwidth]{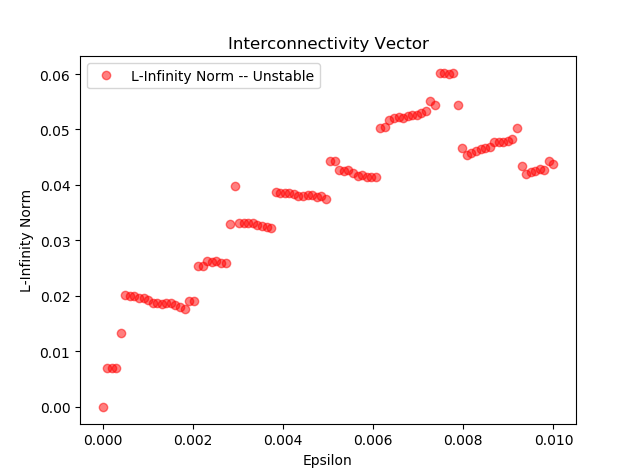}
	 \includegraphics[width=0.45\textwidth]{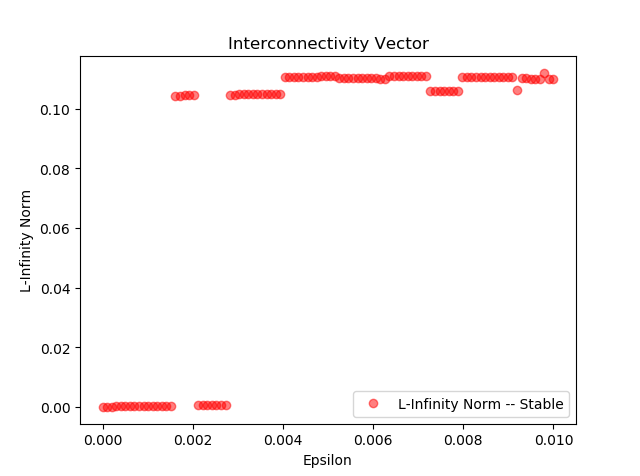}
\caption{Unstable interconnectivity vector (left) and stable interconnectivity vector (right). One point changed (upper) and randomly many points changed (lower).}\label{fig:Linfpics}
\end{figure}
In the top row of Figure \ref{fig:Linfpics} we can see the differences (vertical axis) in the two different vectors' response to a small perturbation $\epsilon$ (horizontal axis) of a single data point. The figure on the left is the change in the unstable interconnectivity vector as $\epsilon$, the amount by which the single point in the cloud is perturbed, changes. In this graph, we can see that when $\epsilon$ is very small (less than $0.008$) the point is not changed enough to affect its corresponding PD. Most likely the persistence points corresponding to this perturbed area of the point cloud are either not moved or are not moved enough to change disks $D_i$. Note, however, that when $\epsilon$ is larger than $0.008$ suddenly there is a large, discontinuous change of about $0.0065$ in the unstable interconnectivity vector. This jump is not present for the stable vector pictured in the top right of Figure \ref{fig:Linfpics} -- a fact which was assured by Theorem \ref{thm:stability}. Looking at the graph of the change in the stable interconnectivity vector we can see a continuous change followed by a plateau which most likely again corresponds to a period where the corresponding persistence points are either not moved or are moved so little that the change in probability $p_i$ is small.

In the bottom row of Figure \ref{fig:Linfpics} we can see the effect of perturbing randomly many points in the data set by a small amount $\epsilon$. In this experiment we used Numpy random.randint command to choose uniformly randomly many of the $150\times3$ coordinates in the point cloud to perturb by $\epsilon$. It appears that, when many small changes are compounded, the unstable interconnectivity vector appears to be more resistant to large jumps in the infinity norm than the stable interconnectivity vector. Here note that the point cloud is not changing continuously with $\epsilon$ but for each $\epsilon$ different points are selected (randomly) from the original point cloud and perturbed by $\epsilon$. Thus we observe more jumps in the left figure of the unstable version of the interconnectivity vector. There are discontinuous jumps in the $L_{\infty}$ norm of the stable interconnectivity vector but this does not contradict Theorem \ref{thm:stability} as the 1-Wasserstein distance between the original PD and the perturbed PD is larger than $\epsilon$.


\section{Numerical Examples}

We will now apply both the stable and unstable interconnectivity vectors to three examples. The first example involves four differently structured random and ordered point clouds. The second example will use a linked-twist map to generate orbits for four distinct parameter values. The final example involves a basic approach to optical character recognition of the Roman alphabet.

\subsection{Random vs. Uniform vs. Fractal Example}
\begin{figure}[hbt!]
  \centering
\minipage{0.24\textwidth}
    \includegraphics[width=\linewidth]{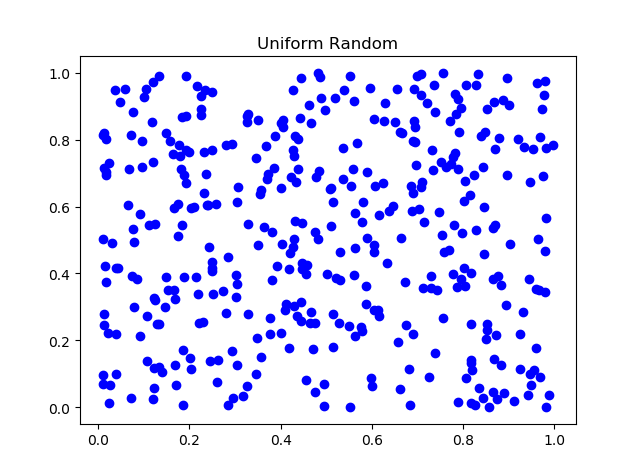}

\endminipage\hfill
\minipage{0.24\textwidth}
	 \includegraphics[width=\linewidth]{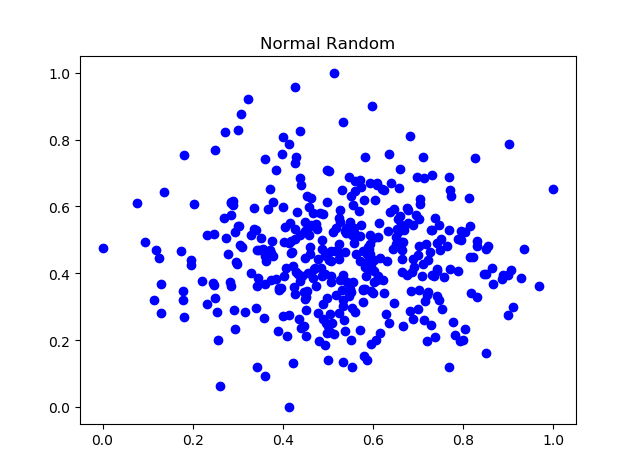}

\endminipage\hfill
\minipage{0.24\textwidth}
	 \includegraphics[width=\linewidth]{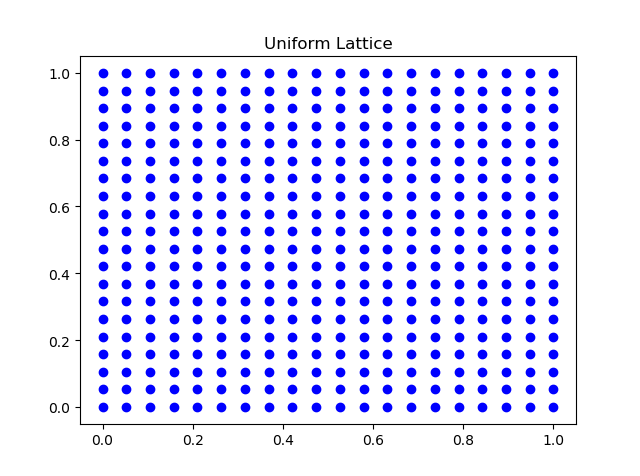}

\endminipage\hfill
\minipage{0.24\textwidth}
	 \includegraphics[width=\linewidth]{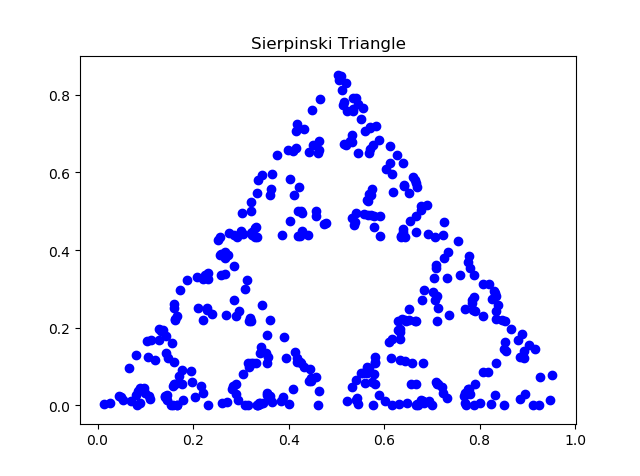}

\endminipage\hfill

\caption{Left to right: a uniform random point cloud, a normal random cloud with center (0.5,0.5) and variance 1 normalized to lie in $[0,1]^2$, a uniform lattice of points, and points from the Sierpinski triangle. Each point cloud contains 400 points.}\label{fig:randomclouds}
\end{figure}
For this experiment we considered four types of point clouds: a uniformly random distribution on $[0,1] \times [0,1]$, a normal random distribution with center at $(0.5,0.5)$ and variance 1 normalized to lie in $[0,1] \times [0,1]$, a uniform lattice of points on $[0,1] \times [0,1]$, and points drawn from the Sierpinski triangle created by the {\it chaos game} \cite{barnsley:1988}.
A sample point cloud for each type is shown in Figure \ref{fig:randomclouds}. Each point cloud has a total of $400$ points. 

Each point cloud was generated ten times and each time the unstable interconnectivity vector, the stable interconnectivity vector, and the persistence vector were calculated from the corresponding 1-dimensional PDs. A value of $\delta = 0.5$ was used in the computation of the stable interconnectivity vector. For each type of point cloud the vectors were averaged over the ten iterations. 
\begin{figure}[hbt!]
  \centering
\minipage{0.6\textwidth}
    \includegraphics[width=\linewidth]{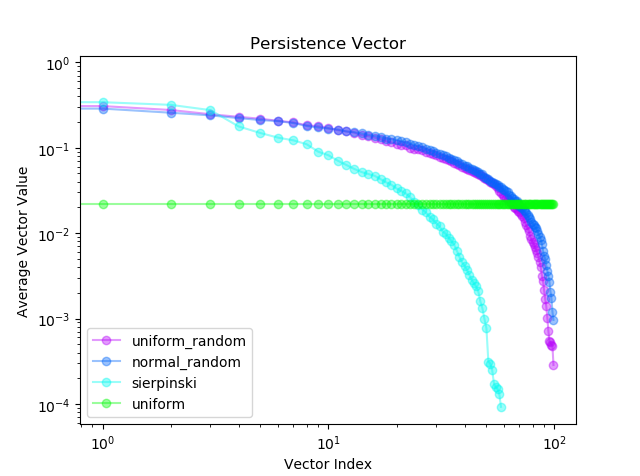}

\endminipage\hfill
\minipage{0.6\textwidth}
	 \includegraphics[width=\linewidth]{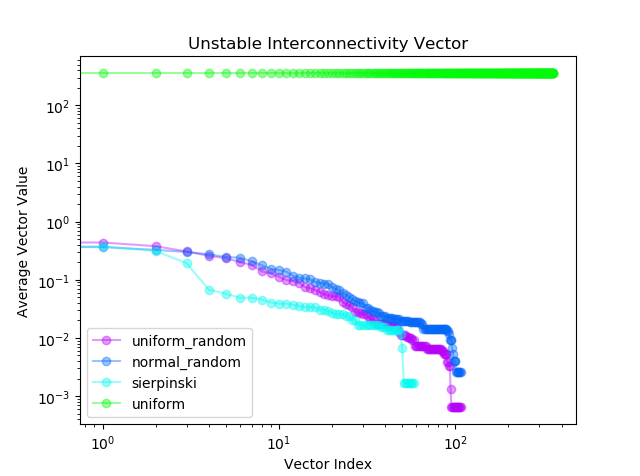}

\endminipage\hfill
\minipage{0.6\textwidth}
	 \includegraphics[width=\linewidth]{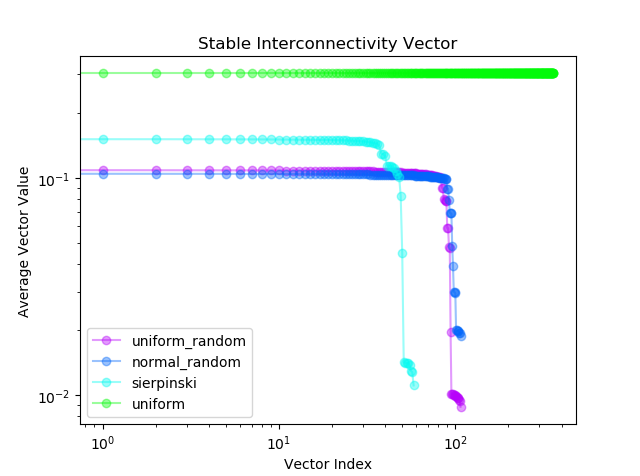}

\endminipage\hfill

\caption{Average vector values for each type of point cloud versus vector index in log-log scale beginning with the persistence vector, followed by the unstable interconnectivity vector, ending with the stable interconnectivity vector.}\label{fig:randomvectors}
\end{figure}
In Figure \ref{fig:randomvectors} there are plots of the average vector value versus the vector index for each of the three vectors in log-log scale. In each of the three graphs there is clear distinction between the uniform lattice point cloud (in green) and the other three types of point clouds. These plots show that both the unstable and stable interconnectivity vectors provide more clear distinction than the persistence vector, particularly for the lattice point cloud. Between the unstable and stable interconnectivity vectors, the stable interconnectivity vector yields a more distinctive power -- there is a large distinction in the stable interconnectivity vector values between the uniform and normal random and the Sierpinski stable interconnectivity vectors, particularly in their low indices.  This is as to be expected because although the Sierpinski cloud is not as evenly spaced as the lattice, it is not a random point cloud but rather it has a defined structure.  

\begin{figure}[h]
  \centering
    \includegraphics[width=0.5\textwidth]{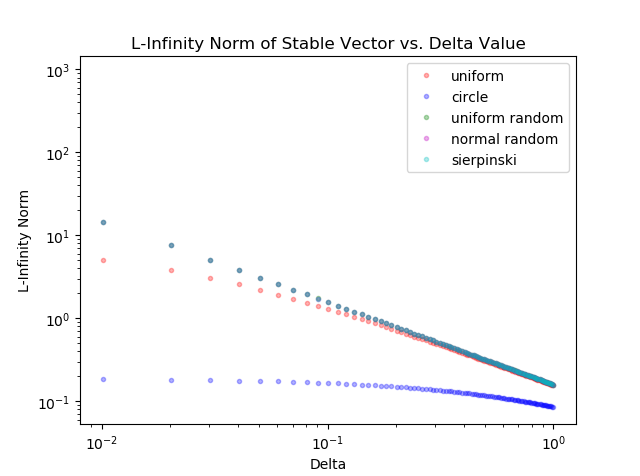}
\caption{The $L_\infty$ norm of the stable interconnectivity vector versus the regularization parameter $\delta$ in log-log scale for the lattice, uniform circle, uniform random and normal random in square and Sierpinski point clouds.}\label{fig:delta}
\end{figure}

As we conclude the demonstration of the first numerical example, we provide a brief discussion on the effect of the regularization term $\delta$ used for the regularization of the covariance vector when the birth and death values are too close to each other. 

Figure \ref{fig:delta} shows the $L_\infty$ norm of the stable interconnectivity vector versus the regularization parameter $\delta$ in log-log scale for the lattice, uniform circle, uniform random, normal random, and Sierpinski point clouds each with 400 points in the unit square. This figure shows the dominating effect of $\delta$ when its value is large. Thus, this graph suggests that one should choose smaller values of $\delta$ in applications of the stable interconnectivity vector.

%
%


\subsection{Linked-Twist Map Example} In this example we considered the linked-twist map which is a discrete dynamical system modeling fluid flow used in \cite{hertzsch:2007} to model flows in DNA microarrays. The orbits of this map, for a parameter value $r>0$ and randomized initial point $(x_0,y_0) \in [0,1] \times [0,1]$, can be calculated by the following rule:
\begin{equation} \label{eqn:linked}
\begin{cases} 
      x_{n+1} = x_n + ry_n(1-y_n) & \text{ mod } 1 \\
     y_{n+1} = y_n + rx_{n+1}(1 - x_{n+1}) & \text{ mod } 1
   \end{cases}
\end{equation}

We generated ten iterations for each $r$ value using different initial points $(x_0,y_0)$. 
\begin{figure}
  \centering
\minipage{0.32\textwidth}
    \includegraphics[width=\linewidth]{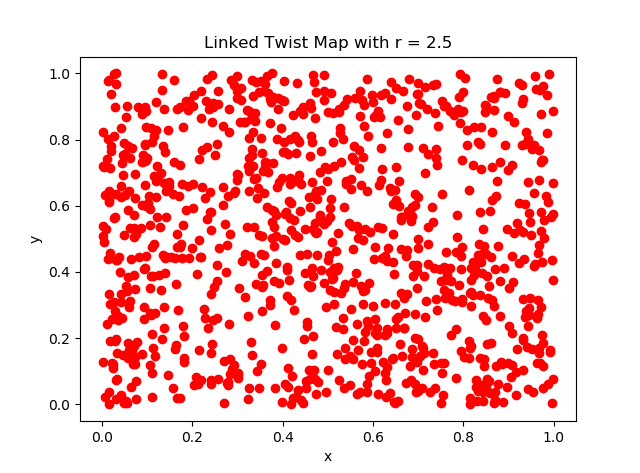}

\endminipage\hfill
\minipage{0.32\textwidth}
	 \includegraphics[width=\linewidth]{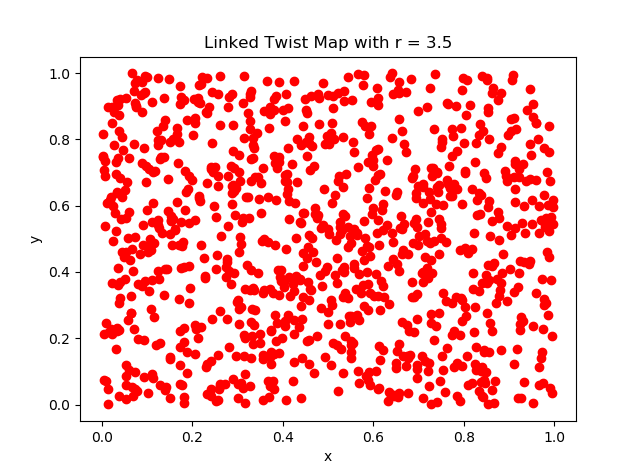}

\endminipage\hfill
\minipage{0.32\textwidth}
	 \includegraphics[width=\linewidth]{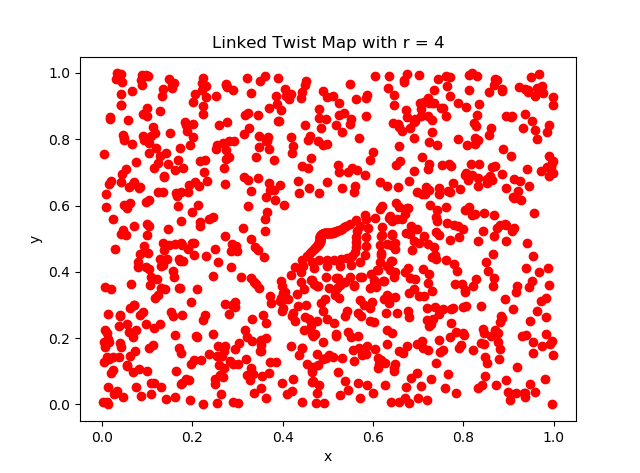}

\endminipage\hfill
\minipage{0.32\textwidth}
	 \includegraphics[width=\linewidth]{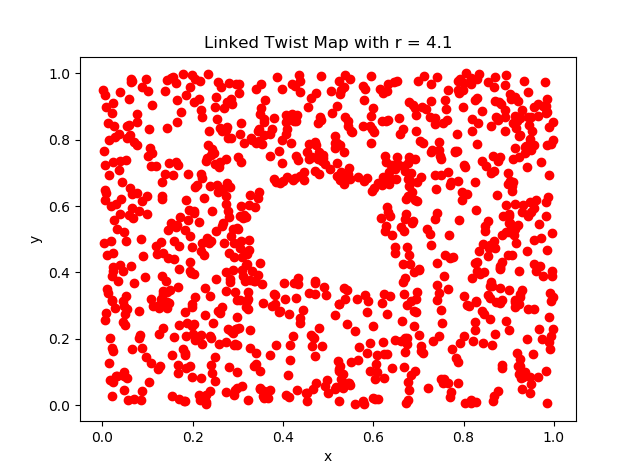}

\endminipage
\minipage{0.32\textwidth}
	 \includegraphics[width=\linewidth]{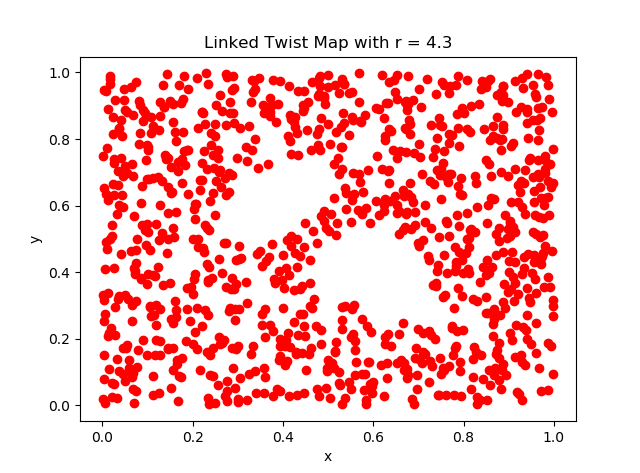}

\endminipage\hfill

\caption{Sample linked-twist map point clouds for parameter values $r = 2.5, 3.5, 4, 4.1,$ and $4.3$. Each point cloud contains 1000 data points. }\label{fig:linkedclouds}
\end{figure}
Figure \ref{fig:linkedclouds} contains sample point clouds for parameter values $r = 2.5, 3.5, 4, 4.1,$ and $4.3$ with a random initial point.
For each point cloud and its corresponding 1-dimensional PD we computed the persistence vector, the unstable interconnectivity vector and the stable interconnectivity vector. A value of $\delta = 0.5$ was used in the computation of the stable interconnectivity vector. 
\begin{figure}[hbt!]
  \centering
\minipage{0.6\textwidth}
    \includegraphics[width=\linewidth]{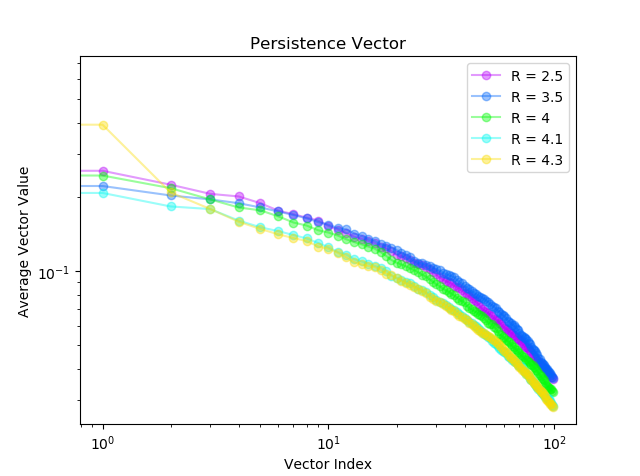}

\endminipage\hfill
\minipage{0.6\textwidth}
	 \includegraphics[width=\linewidth]{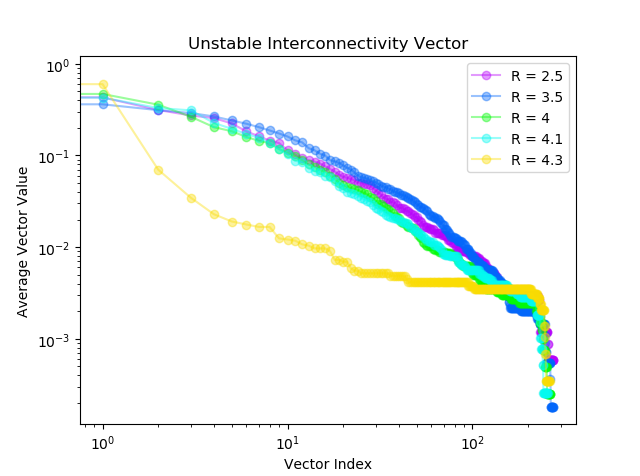}

\endminipage\hfill
\minipage{0.6\textwidth}
	 \includegraphics[width=\linewidth]{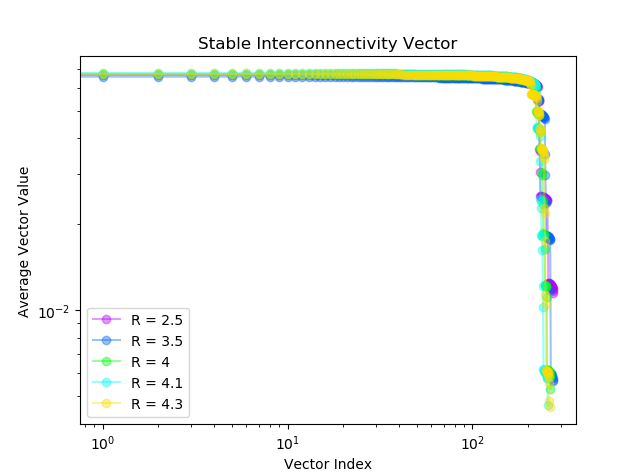}

\endminipage\hfill

\caption{Average vector values for each parameter value $r$ versus vector index in log-log scale beginning with the persistence vector, followed by the unstable interconnectivity vector, ending with the stable interconnectivity vector.}\label{fig:linkedvectors}
\end{figure}

\begin{figure}[hbt!]
  \centering
\minipage{\textwidth}
    \includegraphics[width=0.499\linewidth]{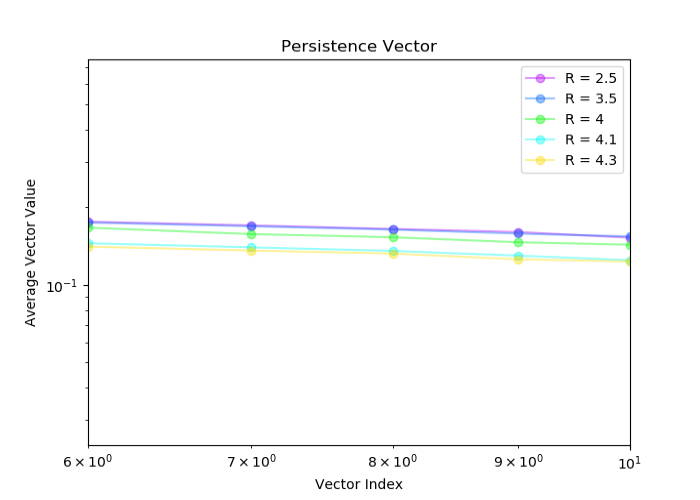}
\includegraphics[width=0.499\linewidth]{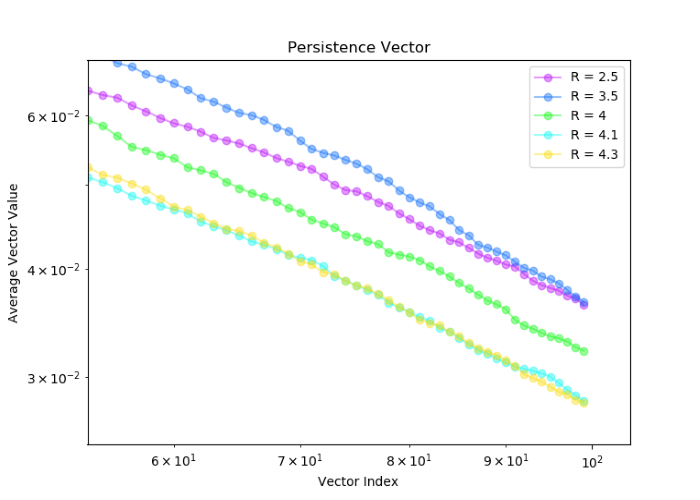}

\endminipage\hfill
\minipage{\textwidth}
	 \includegraphics[width=0.499\linewidth]{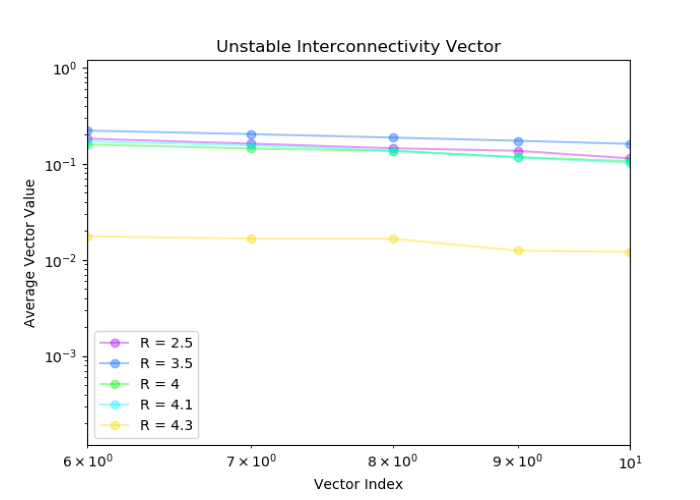}
	\includegraphics[width=0.499\linewidth]{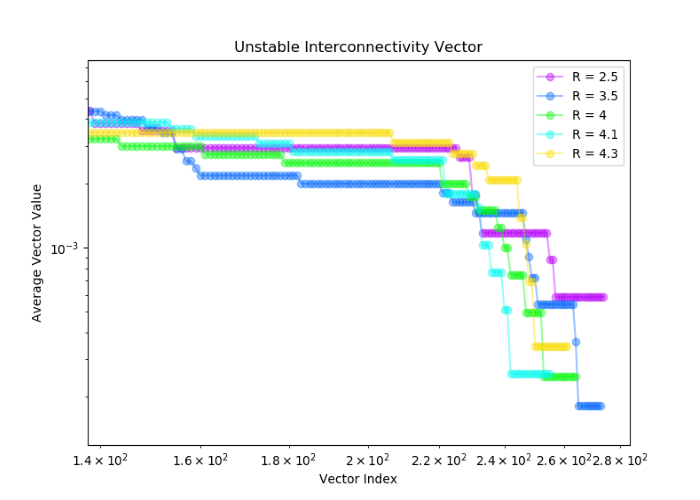}

\endminipage\hfill
\minipage{\textwidth}
	 \includegraphics[width=0.499\linewidth]{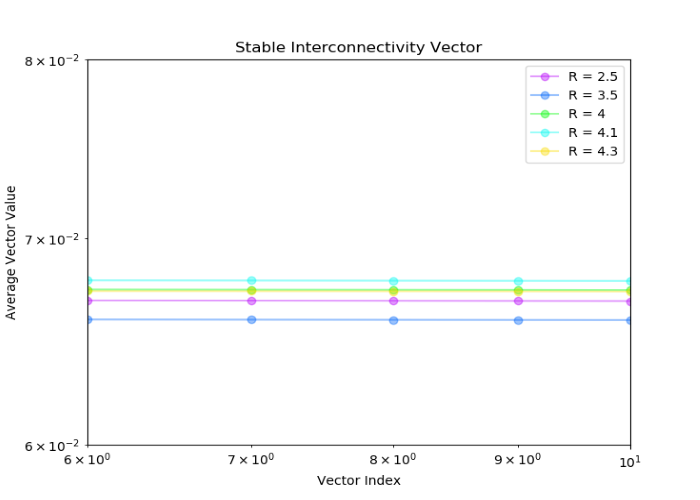}
	 \includegraphics[width=0.499\linewidth]{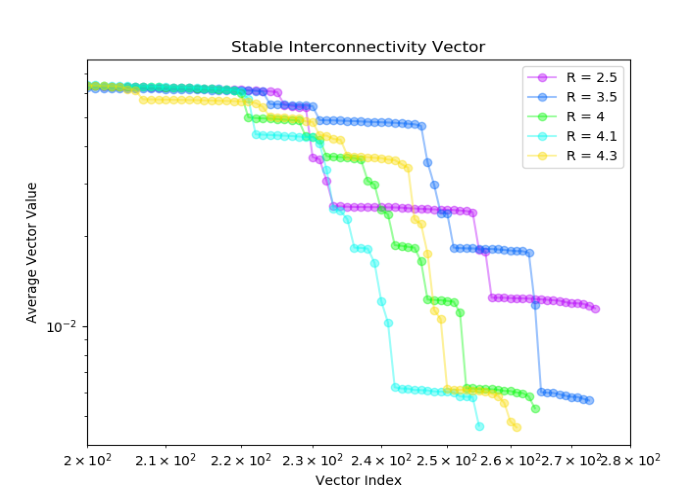}

\endminipage\hfill

\caption{Zoomed low (left) and high (right) index average vector values versus vector index in log-log scale for each parameter value $r$ beginning with the persistence vector, followed by the unstable interconnectivity vector, ending with the stable interconnectivity vector.}\label{fig:linkedvectorslowandhigh}
\end{figure}

In Figure \ref{fig:linkedvectors} there are plots of the average vector value versus the vector index for each of the persistence, unstable and stable interconnectivity vectors, from top to bottom. The most striking observation in the first two graphs in this figure is the distinction between the vector values for the Linked-Twist map with parameter $r = 4.3$ (in yellow) particularly in the low indices (towards the left of each plot) with the (unstable) interconnectivity vector. In particular, the unstable interconnectivity vector plot shows a dramatic difference between the map with parameter $r = 4.3$ and the other maps with different parameters. If we compare this distinction with the underlying point clouds back in Figure \ref{fig:linkedclouds}, we notice that the map with parameter $r = 4.3$ has two distinct holes in the center of the point cloud.  
The distinctive power seems to increase when the unstable interconnectivity vector is stabilized into the stable interconnectivity vector (the bottom figure). Figure \ref{fig:linkedvectorslowandhigh} shows the zoomed plots of those three vectors. The left column shows the average vector values versus low vector index and the right column the same versus high vector index.  In the zoomed plot, it is clear that the stable interconnectivity vector shows much more clearly the distinctions among the 5 different cases of $r$ values, particularly with the lower vector indices. 

It is interesting to observe the decay patterns of the average vector values when the vector index is high. The vector values of high vector index are related to the small scale structure of the given point cloud as the high index entries are the smallest in the vector. The high vector index entries correspond to persistence points with small persistence located close to the diagonal of the PD. In this particular example, the average vector values of high vector indices, both the unstable and stable interconnectivity vectors show sharp cascading decay patterns while the persistence vector does not.

\subsection{Handwriting Example}
For this example we considered 26 hand drawn lower case Roman alphabet characters {\it a} through {\it z}. Our data consists of three sets of 26 characters written by two different authors using a pressure sensitive stylus. Each character was turned into a 100 by 100 pixel greyscale image and the color intensity was measured in each pixel. The intensity ranges from 0 (black) to 255 (white). 
\begin{figure}[hbt!]
  \centering
    \includegraphics[width=0.365\textwidth]{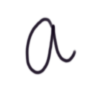}
    \includegraphics[width=0.49\textwidth]{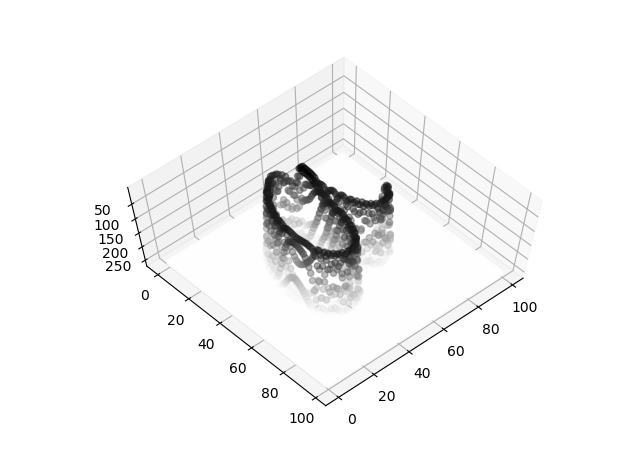}
\caption{Hand drawn letter a (left) sized to 100x100 pixels and its point cloud generated by measuring the intensity of each pixel (right).}\label{fig:lettersa}
\end{figure}
In Figure \ref{fig:lettersa} we have provided a sample letter and its corresponding intensity point cloud. For each point cloud (3 for each character) we calculated the persistence vector, unstable and stable interconnectivity vectors for the 1-dimensional PH.
 Each character's corresponding unstable interconnectivity, stable connectivity, and persistence vectors were then averaged across the three sets of characters. 
\begin{figure}[hbt!]
  \centering
\minipage{0.6\textwidth}
    \includegraphics[width=\linewidth]{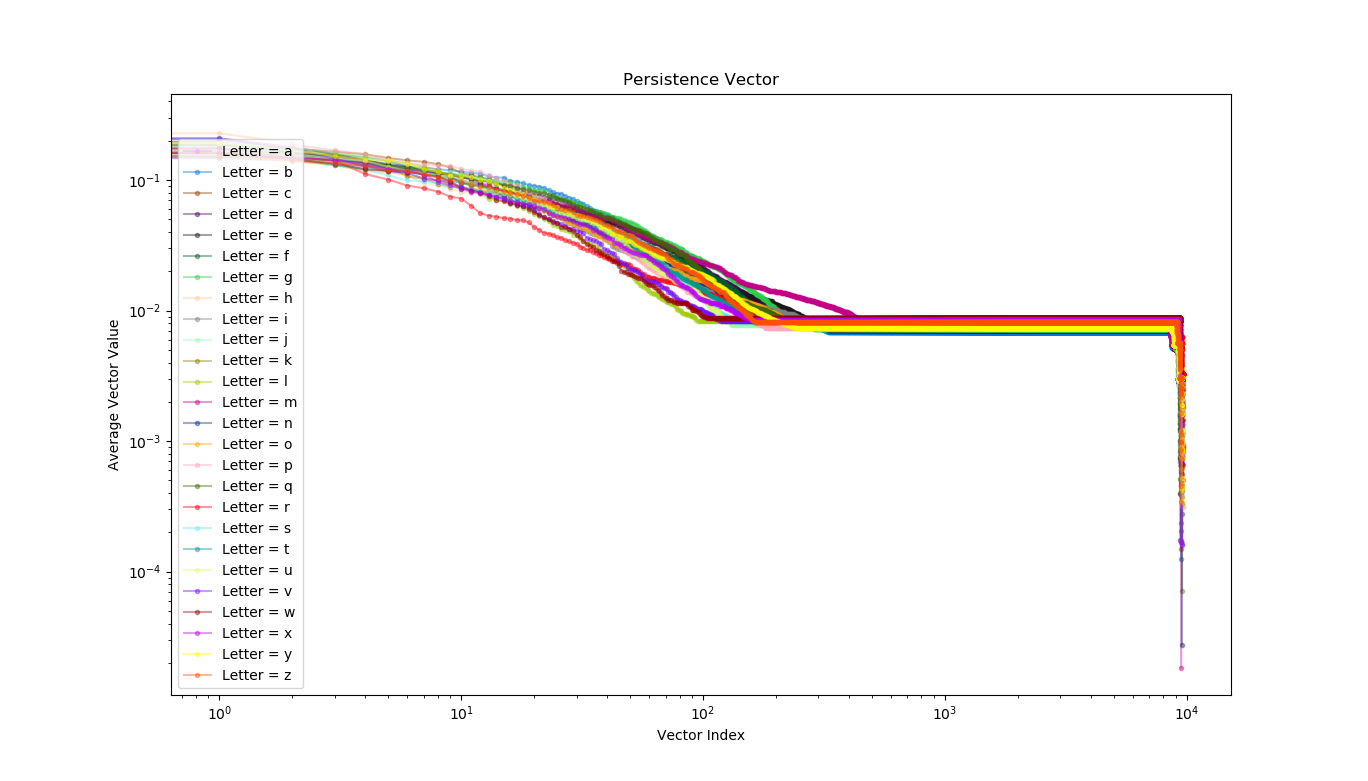}

\endminipage\hfill
\minipage{0.6\textwidth}
	 \includegraphics[width=\linewidth]{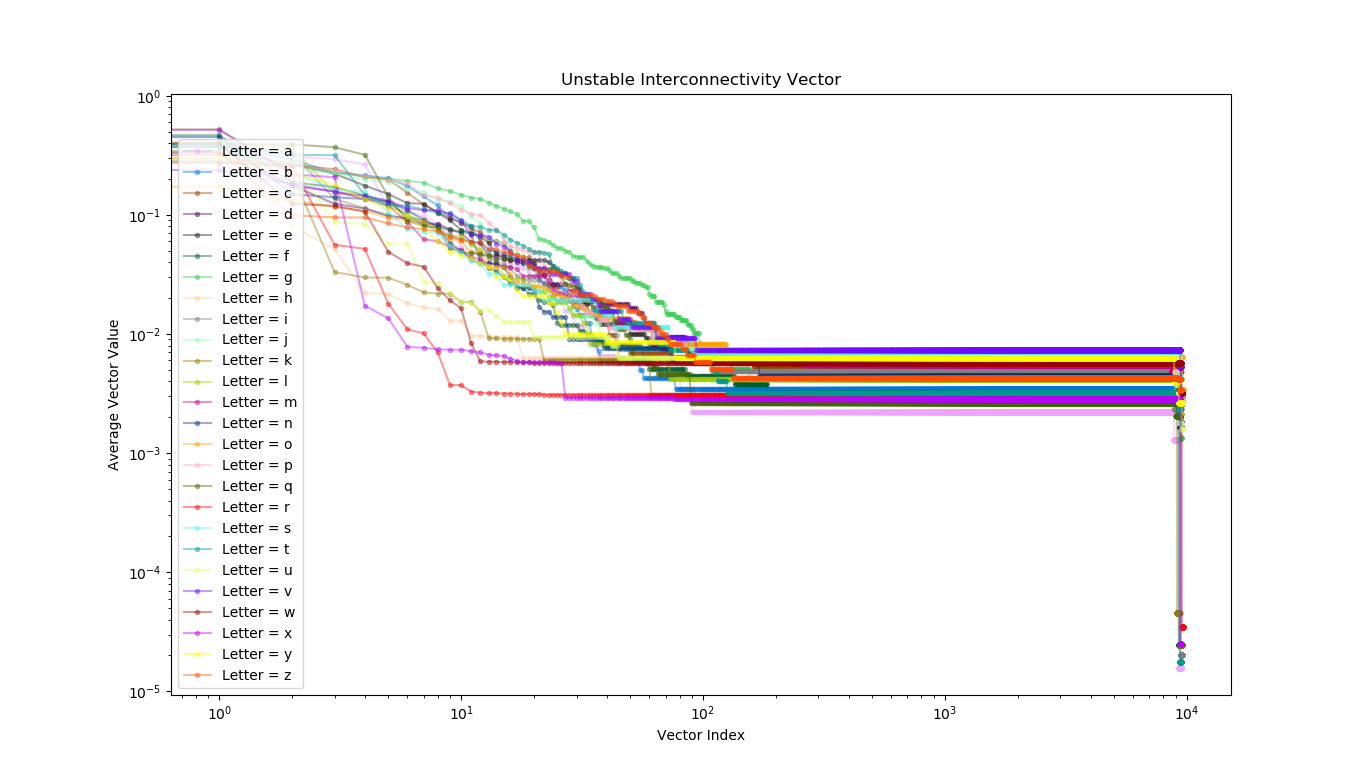}

\endminipage\hfill
\minipage{0.6\textwidth}
	 \includegraphics[width=\linewidth]{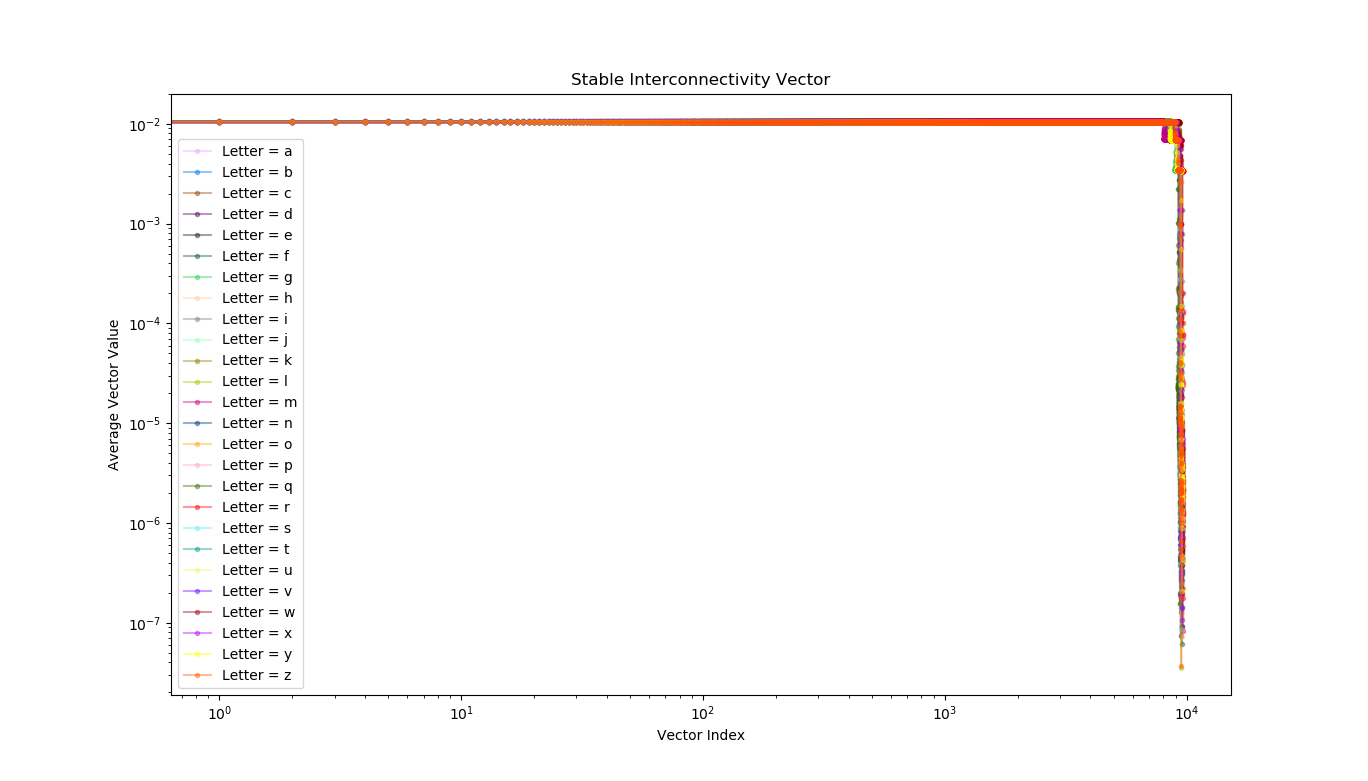}

\endminipage\hfill

\caption{Log-log plots of the average persistence vector, the average unstable interconnectivity vector, and the average stable interconnectivity vector for the 26 Roman characters.}\label{fig:lettersaverage}
\end{figure}

\begin{figure}[hbt!]
  \centering
\minipage{\textwidth}
    \includegraphics[width=0.499\linewidth]{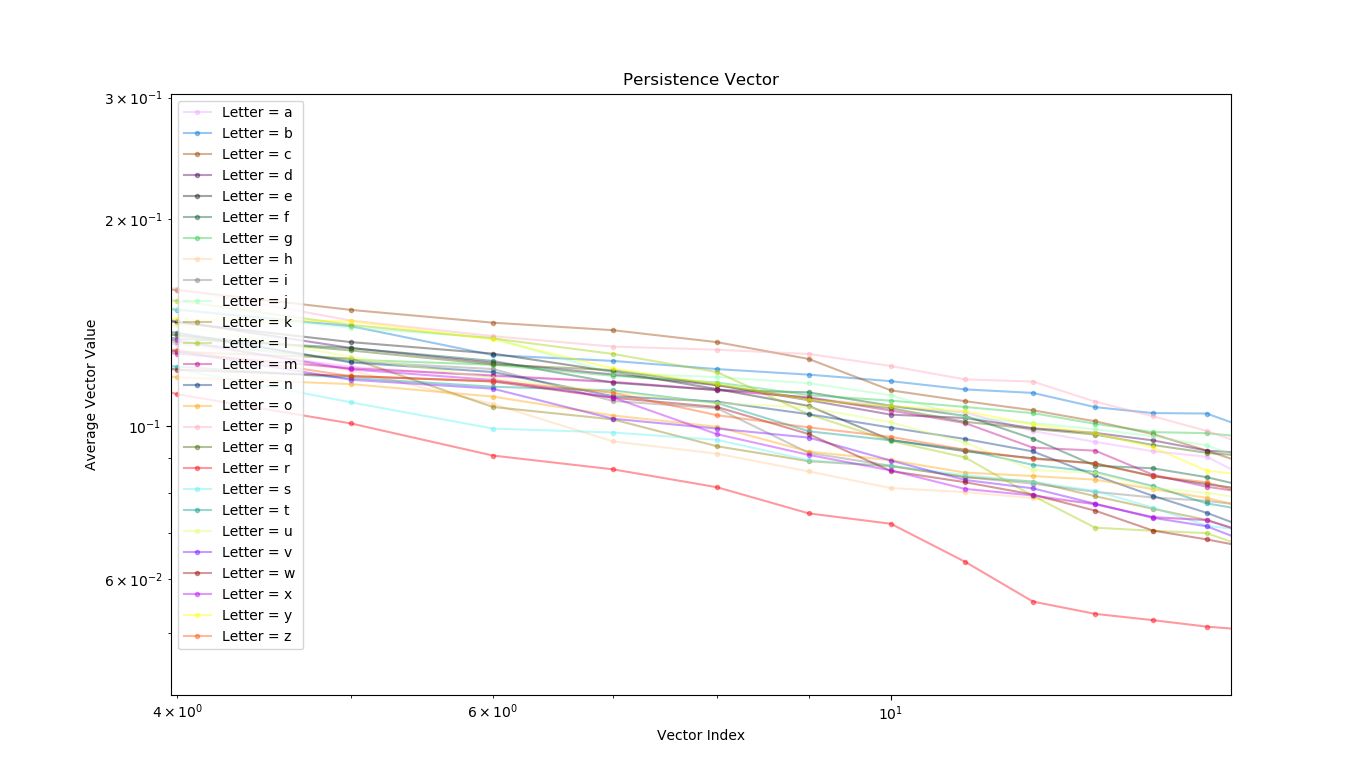}
\includegraphics[width=0.499\linewidth]{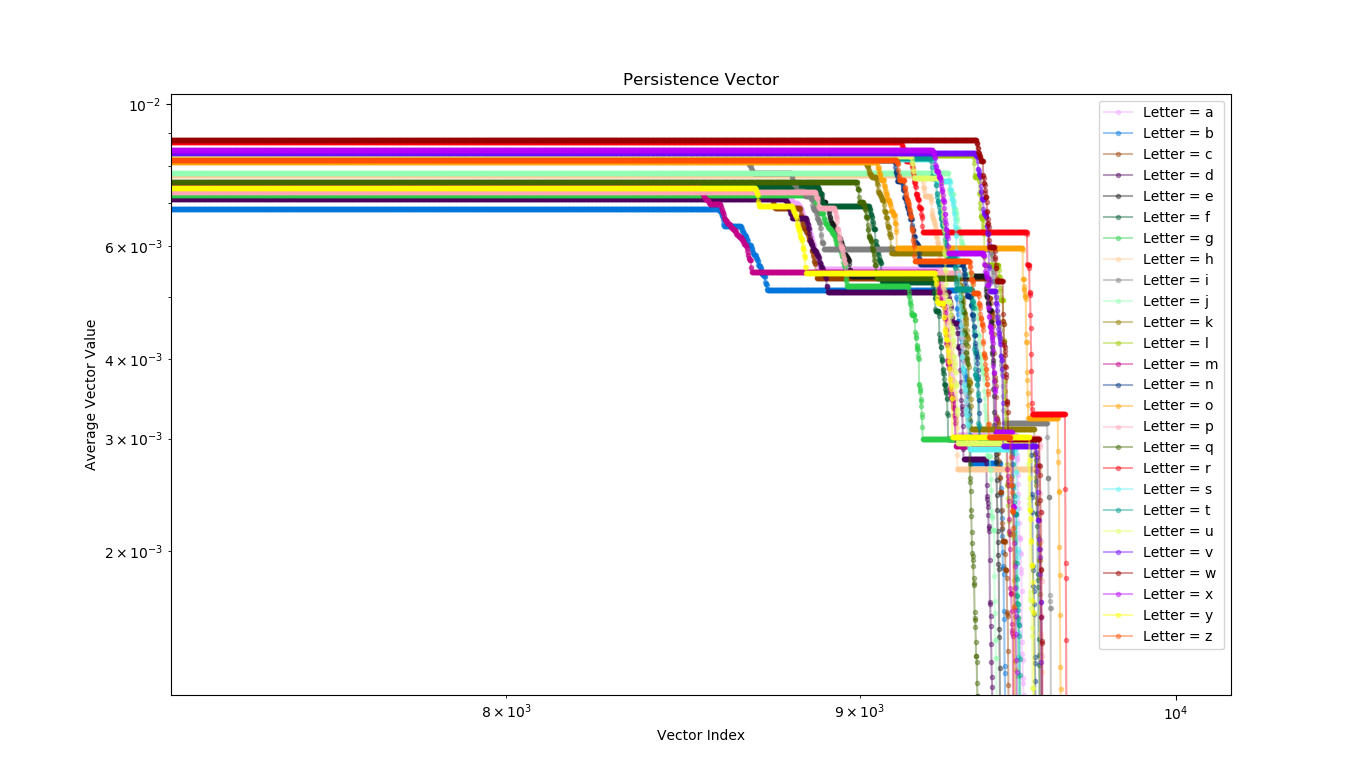}

\endminipage\hfill
\minipage{\textwidth}
	 \includegraphics[width=0.499\linewidth]{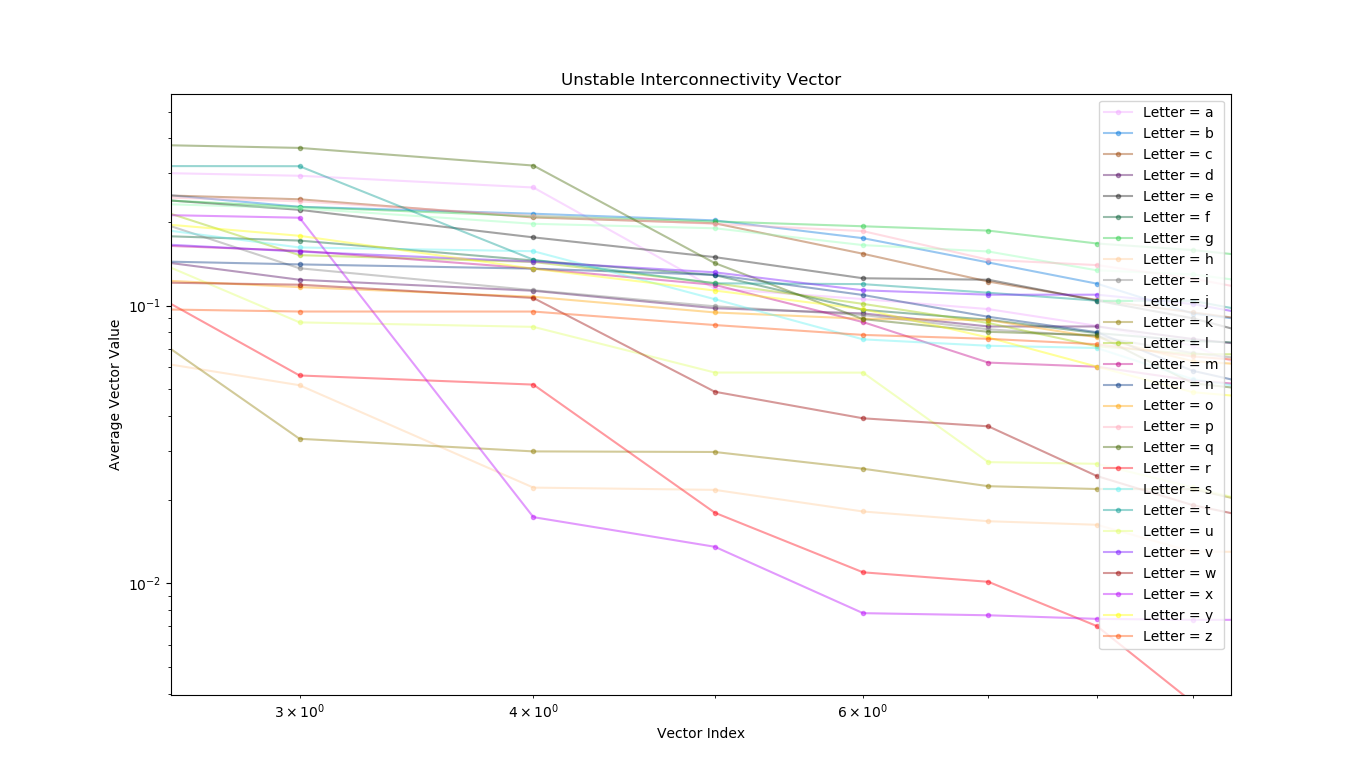}
	\includegraphics[width=0.499\linewidth]{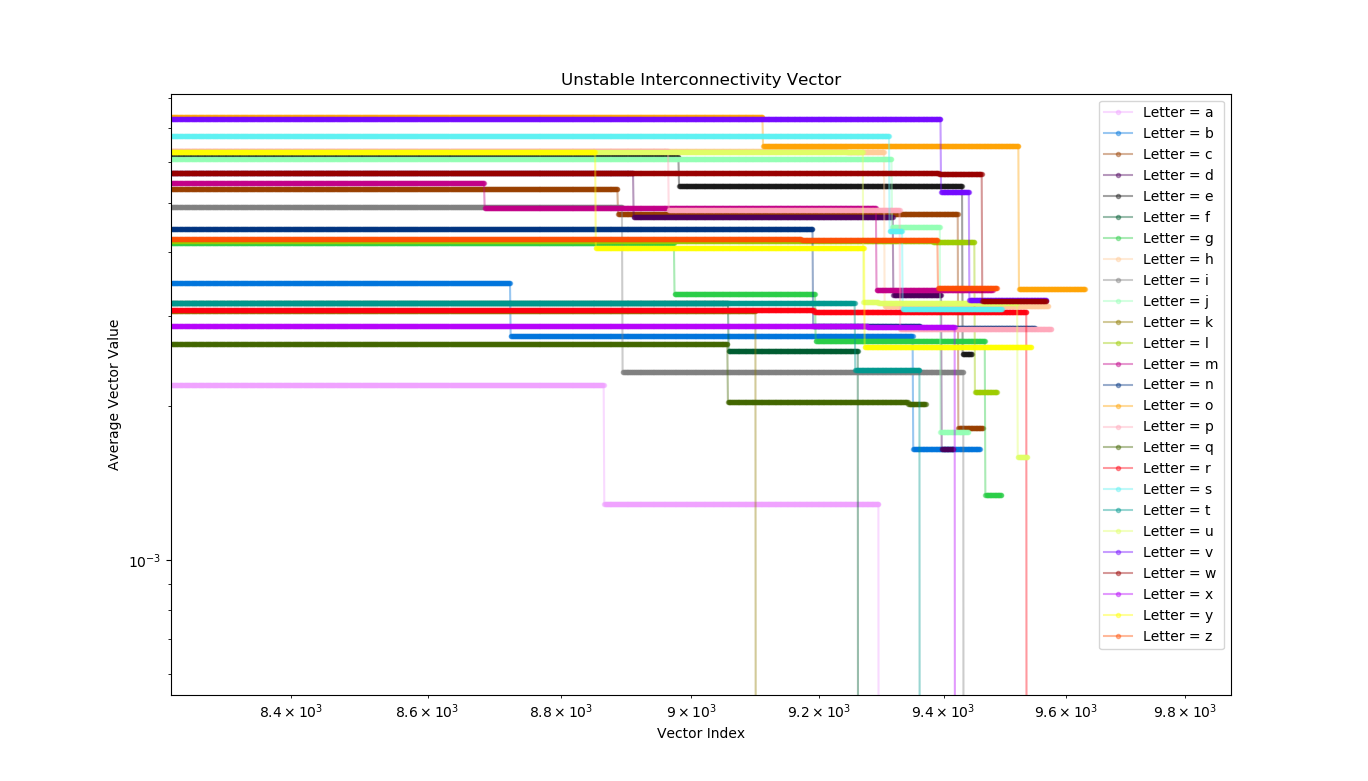}

\endminipage\hfill
\minipage{\textwidth}
	 \includegraphics[width=0.499\linewidth]{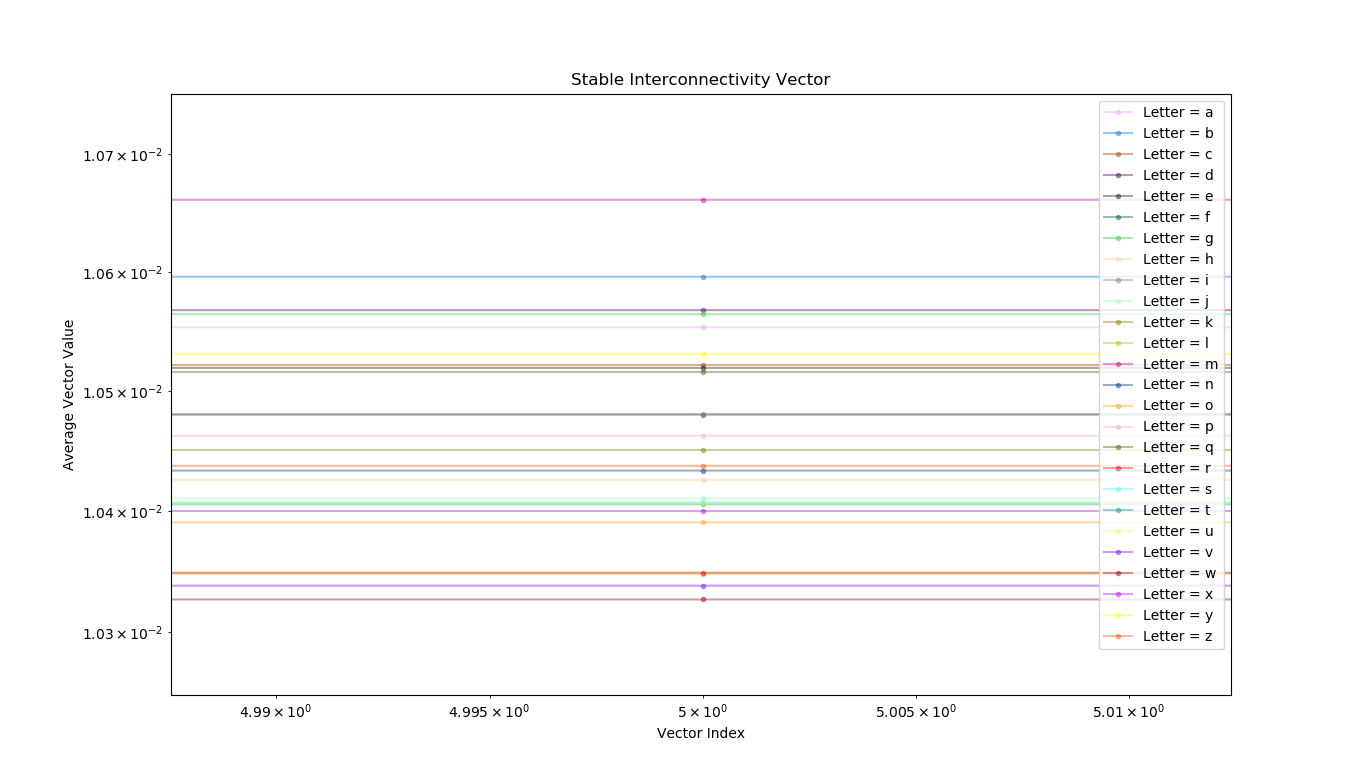}
	 \includegraphics[width=0.499\linewidth]{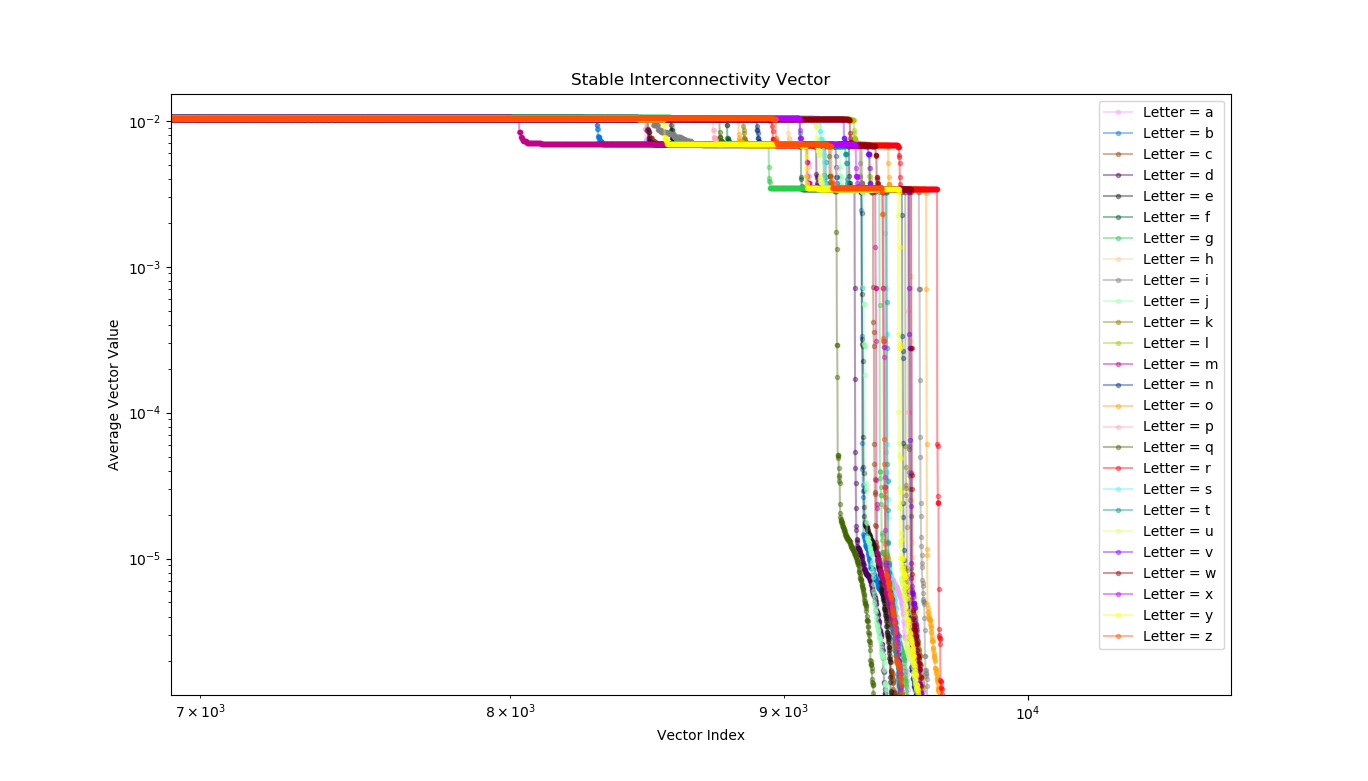}

\endminipage\hfill
\caption{Zoomed low (left) and high (right) index average vector values versus vector index in log-log scale for each letter beginning with the persistence vector, followed by the unstable interconnectivity vector, ending with the stable interconnectivity vector for the 26 Roman characters.}\label{fig:lettervectorslowandhigh}
\end{figure}

Looking at the unstable interconnectivity vector values plotted against the entry index we see in Figure \ref{fig:lettersaverage} that there is no way to naturally separate the characters. For the low indices (left on the graph) there are multiple crossings however for the mid-range and large indices (middle and right on the graph) there are relatively few crossings.

Figure \ref{fig:lettervectorslowandhigh} shows the zoomed average vector values versus vector index in log-log scale for the persistence, unstable and stable interconnectivity vectors from top to bottom. The left column shows the average vector values for lower vector indices while the right column shows the average vector values for higher vector indices. As the figure clearly shows the stable interconnectivity vector shows the most distinctions among characters with lower vector indices. We observe multiple crossings in the figure for the persistence and unstable interconnectivity vectors but no crossings for the stable interconnectivity vector, indicating the potential power of the stable interconnectivity vector in applications, particularly for classification problems.


\section{Concluding Remarks}
Topological data analysis, based on persistent homology, has recently gained much attention in the scientific community and has proved to be highly useful in various applications. The outcome of persistent homology is realized as so-called the persistence barcode and persistence diagram when applied to applications. Their direct raw forms, however, are not suitable to incorporate into a typical machine learning workflow and proper transformation methods are necessary including the kernel and vectorization methods. The kernel methods are known to be computationally expensive. In this paper, we introduced a new vectorization method, called the interconnectivity vector method. The proposed interconnectivity vector method is different from the traditional persistence vector in that it considers not only the persistence but also the interconnectivity between persistence points. By adding this interconnectivity feature to the vectorization, the proposed method yields more distinctive ability to classify topologically different data sets. In this paper, we showed that the proposed interconnectivity vector is unstable and provided a stabilizing method to form the stable interconnectivity vector. Numerical examples showed that the interconnectivity vector performs better than the persistence vector and the stable interconnectivity vector performs the best of the three vectors examined. While conducting the numerical examples, we observed interesting decay patterns of the interconnectivity vector depending on the scale  considered. In our future research, we will further investigate those patterns and study the implications in real applications. We will also incorporate the proposed interconnectivity vector into the machine learning workflow in real applications. 
\clearpage

\nocite{*}

\bibliography{Interconnectivity_Vector_Paper_AMS_format}{}

\begin{thebibliography}{10}

\bibitem{adams2015persistence}
Henry Adams, Tegan Emerson, Michael Kirby, Rachel Neville, Chris Peterson,
  Patrick Shipman, Sofya Chepushtanova, Eric Hanson, Francis Motta, and Lori
  Ziegelmeier.
\newblock Persistence images: A stable vector representation of persistent
  homology.
\newblock {\em Journal of Machine Learning Research}, 18(8):1--35, 2017.

\bibitem{Almgren:data}
K.~{Almgren}, M.~{Kim}, and J.~{Lee}.
\newblock Mining social media data using topological data analysis.
\newblock In {\em 2017 IEEE International Conference on Information Reuse and
  Integration (IRI)}, pages 144--153, 2017.

\bibitem{Almgren2017}
Khaled Almgren, Minkyu Kim, and Jeongkyu Lee.
\newblock Extracting knowledge from the geometric shape of social network data
  using topological data analysis.
\newblock {\em Entropy}, 19(7):360, Jul 2017.

\bibitem{asaad:2017}
Aras Asaad and Sabah Jassim.
\newblock Topological data analysis for image tampering detection.
\newblock In Christian Kraetzer, Yun-Qing Shi, Jana Dittmann, and Hyoung~Joong
  Kim, editors, {\em Digital Forensics and Watermarking}, pages 136--146, Cham,
  2017. Springer International Publishing.

\bibitem{barnsley:1988}
Michael Barnsley.
\newblock {\em {Fractals everywhere}}.
\newblock Academic Press Inc., Boston, MA, 1988.

\bibitem{bendich2016}
Paul Bendich, J.~S. Marron, Ezra Miller, Alex Pieloch, and Sean Skwerer.
\newblock Persistent homology analysis of brain artery trees.
\newblock {\em Ann. Appl. Stat.}, 10(1):198--218, 03 2016.

\bibitem{carlsson:2009}
Gunnar Carlsson.
\newblock Topology and data.
\newblock {\em Bulletin of The American Mathematical Society - BULL AMER MATH
  SOC}, 46:255--308, 04 2009.

\bibitem{carrire2017sliced}
Mathieu Carri\`{e}re, Marco Cuturi, and Steve Oudot.
\newblock Sliced wasserstein kernel for persistence diagrams.
\newblock In {\em Proceedings of the 34th International Conference on Machine
  Learning - Volume 70}, ICML'17, page 664–673. JMLR.org, 2017.

\bibitem{carriere:PDs}
Mathieu Carriere, Steve Oudot, and Maks Ovsjanikov.
\newblock {Local Signatures using Persistence Diagrams}.
\newblock June 2015.
\newblock Working paper or preprint.

\bibitem{carriere:3Dshapes}
Mathieu Carrière, Steve~Y. Oudot, and Maks Ovsjanikov.
\newblock Stable topological signatures for points on 3d shapes.
\newblock {\em Computer Graphics Forum}, 34(5):1--12, 2015.

\bibitem{chazal2013optimal}
Fr{\'e}d{\'e}ric Chazal, Marc Glisse, Catherine Labruere, and Bertrand Michel.
\newblock Optimal rates of convergence for persistence diagrams in topological
  data analysis.
\newblock {\em arXiv preprint arXiv:1305.6239}, 2013.

\bibitem{chung2009}
Moo~K Chung, Peter Bubenik, and Peter~T Kim.
\newblock Persistence diagrams of cortical surface data.
\newblock {\em Information Processing in Medical Imaging}, pages 386--397,
  2009.

\bibitem{Edelsbrunner2002}
{Edelsbrunner}, {Letscher}, and {Zomorodian}.
\newblock Topological persistence and simplification.
\newblock {\em Discrete {\&} Computational Geometry}, 28(4):511--533, Nov 2002.

\bibitem{Gidea2017}
Marian Gidea.
\newblock Topological data analysis of critical transitions in financial
  networks.
\newblock In Erez Shmueli, Baruch Barzel, and Rami Puzis, editors, {\em 3rd
  International Winter School and Conference on Network Science}, pages 47--59,
  Cham, 2017. Springer International Publishing.

\bibitem{gidea:2018}
Marian Gidea and Yuri Katz.
\newblock Topological data analysis of financial time series: Landscapes of
  crashes.
\newblock {\em Physica A: Statistical Mechanics and its Applications}, 491:820
  -- 834, 2018.

\bibitem{Goldfarb2014}
Daniel Goldfarb.
\newblock An application of topological data analysis to hockey analytics.
\newblock {\em CoRR}, abs/1409.7635, 2014.

\bibitem{hatcher:2000}
Allen Hatcher.
\newblock {\em {Algebraic topology}}.
\newblock Cambridge Univ. Press, Cambridge, 2000.

\bibitem{hertzsch:2007}
Jan-Martin Hertzsch, Rob Sturman, and Stephen Wiggins.
\newblock Dna microarrays: Design principles for maximizing ergodic, chaotic
  mixing.
\newblock {\em Small}, 3(2):202--218, 2007.

\bibitem{heydenreich2020}
Sven Heydenreich, Benjamin Brück, and Joachim Harnois-Déraps.
\newblock Persistent homology in cosmic shear: constraining parameters with
  topological data analysis.
\newblock {\em arXiv preprint arXiv:2007.13724}, 2020.

\bibitem{hofer2017deep}
Christoph Hofer, Roland Kwitt, Marc Niethammer, and Andreas Uhl.
\newblock Deep learning with topological signatures.
\newblock NIPS'17, page 1633–1643, Red Hook, NY, USA, 2017. Curran Associates
  Inc.

\bibitem{kolouri2015sliced}
Soheil Kolouri, Yang Zou, and Gustavo~K. Rohde.
\newblock Sliced wasserstein kernels for probability distributions.
\newblock In {\em Proceedings of the IEEE Conference on Computer Vision and
  Pattern Recognition (CVPR)}, June 2016.

\bibitem{Lee2018}
Yongjin Lee, Senja~D. Barthel, Pawe{\l} D{\l}otko, Seyed~Mohamad Moosavi,
  Kathryn Hess, and Berend Smit.
\newblock High-throughput screening approach for nanoporous materials genome
  using topological data analysis: Application to zeolites.
\newblock {\em Journal of Chemical Theory and Computation}, 14(8):4427--4437,
  Aug 2018.

\bibitem{Lo2018}
Derek Lo and Briton Park.
\newblock Modeling the spread of the zika virus using topological data
  analysis.
\newblock {\em PLOS ONE}, 13(2):1--7, 02 2018.

\bibitem{McGuirl2020}
Melissa~R. McGuirl, Alexandria Volkening, and Bj{\"o}rn Sandstede.
\newblock Topological data analysis of zebrafish patterns.
\newblock {\em Proceedings of the National Academy of Sciences},
  117(10):5113--5124, 2020.

\bibitem{Nicponski2020}
John Nicponski and Jae-Hun Jung.
\newblock Topological data analysis of vascular disease: I a theoretical
  framework.
\newblock {\em Frontiers in Applied Mathematics and Statistics}, 6:34, 2020.

\bibitem{Nielson2015}
Jessica~L. Nielson, Jesse Paquette, Aiwen~W. Liu, Cristian~F. Guandique, C.~Amy
  Tovar, Tomoo Inoue, Karen-Amanda Irvine, John~C. Gensel, Jennifer Kloke,
  Tanya~C. Petrossian, Pek~Y. Lum, Gunnar~E. Carlsson, Geoffrey~T. Manley, Wise
  Young, Michael~S. Beattie, Jacqueline~C. Bresnahan, and Adam~R. Ferguson.
\newblock Topological data analysis for discovery in preclinical spinal cord
  injury and traumatic brain injury.
\newblock {\em Nature Communications}, 6(1):8581, Oct 2015.

\bibitem{Otter2015roadmap}
Nina Otter, Mason Porter, Ulrike Tillmann, Peter Grindrod, and Heather
  Harrington.
\newblock A roadmap for the computation of persistent homology.
\newblock {\em EPJ Data Science}, 6, 06 2015.

\bibitem{Pachauri2011}
Deepti Pachauri, Chris Hinrichs, Moo~K. Chung, Sterling~C. Johnson, and Vikas
  Singh.
\newblock Topology-based kernels with application to inference problems in
  alzheimer's disease.
\newblock {\em IEEE transactions on medical imaging}, 30(10):1760--1770, Oct
  2011.
\newblock 21536520[pmid].

\bibitem{perea2013sliding}
Jose~A. Perea and John Harer.
\newblock Sliding windows and persistence: An application of topological
  methods to signal analysis.
\newblock {\em Foundations of Computational Mathematics}, 15(3):799--838, Jun
  2015.

\bibitem{Sivic}
Josef Sivic and Andrew Zisserman.
\newblock Efficient visual search of videos cast as text retrieval.
\newblock {\em IEEE Transactions on Pattern Analysis and Machine Intelligence},
  31(4):591--605, 2009.

\bibitem{Sizemore:2019}
Ann~E. Sizemore, Jennifer~E. Phillips-Cremins, Robert Ghrist, and Danielle~S.
  Bassett.
\newblock The importance of the whole: Topological data analysis for the
  network neuroscientist.
\newblock {\em Network Neuroscience}, 3(3):656--673, 2019.

\bibitem{Sorensen2020}
S{\o}ren~S. S{\o}rensen, Christophe A.~N. Biscio, Mathieu Bauchy, Lisbeth
  Fajstrup, and Morten~M. Smedskjaer.
\newblock Revealing hidden medium-range order in amorphous materials using
  topological data analysis.
\newblock {\em Science Advances}, 6(37), 2020.

\bibitem{Taylor:2015}
Dane Taylor and Florian~Klimm et~al.
\newblock Topological data analysis of contagion maps for examining spreading
  processes on networks.
\newblock {\em Nature Communications}, 6(7723), 07 2015.

\bibitem{gudhi:urm}
{The GUDHI Project}.
\newblock {\em {GUDHI} User and Reference Manual}.
\newblock {GUDHI Editorial Board}, {3.3.0} edition, 2020.

\bibitem{Umeda2017}
Yuhei Umeda.
\newblock Time series classification via topological data analysis.
\newblock {\em Information and Media Technologies}, 12:228--239, 2017.

\bibitem{wasserstein:1969}
L.~N. Vaserstein.
\newblock Markov processes over denumerable products of spaces, describing
  large systems of automata.
\newblock {\em Problems Inform. Transmission}, 5(3):47--52, 1969.

\bibitem{Xu2019}
X.~Xu, J.~Cisewski-Kehe, S.B. Green, and D.~Nagai.
\newblock Finding cosmic voids and filament loops using topological data
  analysis.
\newblock {\em Astronomy and Computing}, 27:34 -- 52, 2019.

\bibitem{zielinski2018persistence}
Bartosz Zieli{\'{n}}ski, Micha{\l} Lipi{\'{n}}ski, Mateusz Juda, Matthias
  Zeppelzauer, and Pawe{\l} D{\l}otko.
\newblock Persistence codebooks for topological data analysis.
\newblock {\em Artificial Intelligence Review}, Sep 2020.

\end{thebibliography}
\bibliographystyle{plain}


\end{document}